\def\bs{\boldsymbol}
\def\ob{\langle}
\def\cb{\rangle}
\def\ra{\rightarrow}
\def\m{\mathcal}
\def\cp{\preccurlyeq}
\DeclareSymbolFont{symbolsC}{U}{txsyc}{m}{n}
\DeclareMathSymbol{\strictif}{\mathrel}{symbolsC}{74}
\DeclareMathSymbol{\boxright}{\mathrel}{symbolsC}{128}
\DeclareMathSymbol{\Diamondright}{\mathrel}{symbolsC}{132}
\DeclareMathSymbol{\boxRight}{\mathrel}{symbolsC}{136}
\DeclareMathSymbol{\DiamondRight}{\mathrel}{symbolsC}{140}
\DeclareMathSymbol{\Diamonddot}{\mathrel}{symbolsC}{144}
\newenvironment{dedsystem}
{\centering
 \newcolumntype{Y}{>{\small\centering\arraybackslash}X}
 \setlength{\extrarowheight}{4ex}
 \tabularx{\textwidth}{|YYY|}}
{\endtabularx}
\newdefinition{definition}{Definition}
\newtheorem{theorem}{Theorem}
\newtheorem{lemma}{Lemma}
\newtheorem{corollary}{Corollary}
\title{Intuitionistic PUC-Logic for Constructive Counterfactuals}
\author[cds]{Ricardo Queiroz de Araujo Fernandes\fnref{fn1}}                                                                            
\ead{ricardo@cds.eb.mil.br}
\author[di]{Edward Hermann Haeusler}                   
\ead{hermann@inf.puc-rio.br}     
\author[df]{Luiz Carlos Pereira}                   
\ead{luiz@inf.puc-rio.br}                                       
\address[cds]{Centro de Desenvolvimento de Sistemas\\QG do Exército, Bloco G, 2º Andar\\Setor Militar Urbano, CEP 70630-901\\Brasília, DF, Brasil}
\address[di]{Departamento de Inform\'atica}
\address[df]{Departamento de Filosofia\\Pontif\'icia Universidade Cat\'olica do Rio de Janeiro\\Rua Marquês de São Vicente, 225, Gávea, CEP 22453-900\\Rio de Janeiro, RJ, Brasil}
\begin{document}

\begin{abstract} 
We present the intuitionistic version of PUC-Logic. After that, we present a constructive approach to Lewis' counterfactual abstraction to show that it does not require the classical absurd rule.
\end{abstract}

\begin{keyword}
Conditionals, Logic, Natural Deduction, Counterfactual Logic, Deontic Logic
\end{keyword}

\maketitle

\section*{Introduction}

In human speech, one may change his opinion about the truth of a given counterfactual sentence if his knowledge about the subject grows. For example:

\begin{itemize}
\item[--] (Pedro) I found Ana sad during the party.
\item[--] (Jonas) If her boyfriend had come to party, then Ana would have been happy during the party.
\end{itemize}

\noindent A day after this speech, Jonas discovered that Ana was sad during the party, because his boyfriend betrayed her. So, he may think that his counterfactual sentence is no longer valid.\\

The intuitionistic approach is a traditional way to deal with knowledge growth. We discuss one alternative approach over the counterfactual logic to express this property of the human speech, presenting the intuitionistic reformulation of the PUC-logic, called iPUC-Logic for short.\\

We avoid the repetition in this chapter of definitions and lemmas of PUC-Logic \cite{PUC-Logic-ArXiv} that are the same for iPUC-Logic.

\begin{definition} \label{I_accessibility}
Given a set of worlds $\m{W}$, a nested sets function $\$$ over $\m{W}$ and a truth evaluation function $\m{V}$ for each atomic formula, we define a relation of \textit{accessibility} from a world $u$ to a world $v$, denoted by $u \Yright v$, as a reflexive and transitive relation such that $\$(v) = \$(u)$ and, if $u \in \m{V}(\alpha)$ and $u \Yright v$, then $v \in \m{V}(\alpha)$. Given a world $u$, the set of worlds $v$, such that $u \Yright v$, is denoted by $\m{A}(u)$.
\end{definition}

The restriction $\$(v) = \$(u)$ means that the proximity notions are preserved by the accessible worlds. This restriction was meant to preserve lemma \ref{I_lemmaRa}. But since we are interested in knowledge growth, it is not an artificial restriction. In the example above, the perception of Jonas about how things works did not changed much from the additional knowledge. It means that, for him, the notions of similarity were just the same as in the day before.

\begin{definition} \label{I_modellingDefinition}
Given a variable assignment function $\sigma$, the relation $\models$ of \textit{satisfaction} between wff, labels, models and templates is given by:
\begin{enumerate}
\setcounter{enumi}{1}
\item $\ob \m{W} , \$ , \m{V} , \chi \cb \models \neg \; (\alpha^{\Sigma})$ iff: $\neg \; (\alpha^{\Sigma}) \in \bs{F}_{n}$ and\\\hspace*{4cm}$\forall \lambda \in \m{A}(\chi) : \ob \m{W} , \$ , \m{V} , \lambda \cb \not\models \alpha^{\Sigma}$;

\setcounter{enumi}{4}
\item $\ob \m{W} , \$ , \m{V} , \chi \cb \models \alpha^{\Sigma} \ra \beta^{\Omega}$ iff: $\alpha^{\Sigma} \ra \beta^{\Omega} \in \bs{F}_{n}$ and\\\hspace*{4,4cm}$\forall \lambda \in \m{A}(\chi)$, if $\ob \m{W} , \$ , \m{V} , \lambda \cb \models \alpha^{\Sigma}$, then $\ob \m{W} , \$ , \m{V} , \lambda \cb \models \beta^{\Omega}$;

\setcounter{enumi}{13}
\item $\ob \m{W} , \$ , \m{V} , \chi , N \cb \models \neg \; (\alpha^{\Sigma})$ iff: $\neg \; (\alpha^{\Sigma}) \in \bs{F}_{w}$ and $\forall \lambda \in \m{A}(\chi) : \ob \m{W} , \$ , \m{V} , \lambda , N \cb \not\models \alpha^{\Sigma}$;

\setcounter{enumi}{16}
\item $\ob \m{W} , \$ , \m{V} , \chi , N \cb \models \alpha^{\Sigma} \ra \beta^{\Omega}$ iff: $\alpha^{\Sigma} \ra \beta^{\Omega} \in \bs{F}_{w}$ and\\\hspace*{4,85cm}$\forall \lambda \in \m{A}(\chi)$: if $\ob \m{W} , \$ , \m{V} , \lambda , N \cb \models \alpha^{\Sigma}$, then $\ob \m{W} , \$ , \m{V} , \lambda , N \cb \models \beta^{\Omega}$.
\end{enumerate}
\end{definition}

The iPUC Natural Deduction System is obtained from the PUC-ND by removing the rule 7 (classical absurd rule).

\section{iPUC Soundness and Completeness}

\begin{lemma} \label{I_lemmaRa}
Given $\Delta$ without existential quantifiers, if $(\alpha^{\Sigma} \ra \beta^{\Omega})^{\overline{\Delta}}$ is wff, then it implies $\alpha^{\Sigma,\overline{\Delta}} \ra \beta^{\Omega,\overline{\Delta}}$.
\end{lemma}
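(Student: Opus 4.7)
The plan is to prove the semantic statement directly from the satisfaction clauses, proceeding by induction on $\overline{\Delta}$, and at each step verifying that the argument goes through intuitionistically --- i.e.\ without invoking the discarded classical absurd rule. The intuition is the familiar normal-modal fact that a ``boxed'' implication entails the corresponding implication between ``boxed'' formulas; here the ``box'' is the label $\overline{\Delta}$, and the clause $\$(v)=\$(u)$ from definition \ref{I_accessibility} is precisely what will let us transport labels through $\Yright$-quantification.

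First I would fix an arbitrary model and world $\chi$ (either $\ob\m{W},\$,\m{V},\chi\cb$ or the weighted $\ob\m{W},\$,\m{V},\chi,N\cb$) satisfying $(\alpha^{\Sigma}\ra\beta^{\Omega})^{\overline{\Delta}}$. By clauses (5)/(17), showing $\chi\models\alpha^{\Sigma,\overline{\Delta}}\ra\beta^{\Omega,\overline{\Delta}}$ reduces to taking an arbitrary $\lambda\in\m{A}(\chi)$ with $\lambda\models\alpha^{\Sigma,\overline{\Delta}}$ and deriving $\lambda\models\beta^{\Omega,\overline{\Delta}}$. At this point we have two pieces of information about $\lambda$: it is $\Yright$-accessible from $\chi$, and $\$(\lambda)=\$(\chi)$, so every $\$$-based condition expressible at $\chi$ remains literally the same at $\lambda$.

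Next I would induct on the length of $\overline{\Delta}$. The base case is trivial. For the step, I peel off the outermost layer of $\overline{\Delta}$: because $\Delta$ contains no existential quantifier, that layer unfolds as a \emph{universal} condition --- ``for every world in a certain $\$(\chi)$-sphere, \ldots''. Applying this unfolding to both $(\alpha^{\Sigma}\ra\beta^{\Omega})^{\overline{\Delta}}$ at $\chi$ and to the goal formula at $\lambda$, the quantifiers range over the same set of witnesses, thanks to $\$(\lambda)=\$(\chi)$. Hence the inductive hypothesis can be applied witness-by-witness to conclude $\lambda\models\beta^{\Omega,\overline{\Delta}}$.

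The main obstacle will be the bookkeeping of label-satisfaction clauses inherited from PUC-Logic against the intuitionistic modal clauses (2), (5), (14), (17); the logical content is simply that distributing $\overline{\Delta}$ across $\ra$ requires only forward chaining of universal statements, which is constructively valid. This is the step where, in a richer fragment, one would be tempted to argue by contradiction on a negated subformula; the hypothesis excluding existential quantifiers from $\Delta$ is exactly what blocks any such need. Once this is checked, the proof mirrors the PUC-Logic version and rule 7 is never invoked, confirming that the lemma carries over to iPUC-Logic.
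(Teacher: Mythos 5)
Your plan matches the paper's proof: both argue semantically by induction on the length of $\Delta$, unfolding each label layer (necessarily a universal quantification, since $\Delta$ has no existentials) and using the clause $\$(\lambda)=\$(\chi)$ of the accessibility relation to identify the quantifier domains at $\chi$ and at the accessible world $\lambda$, so that the implication distributes constructively. The paper merely executes this with an explicit case split on label types ($\circledast$, $N$, $\ast$, world variables) and additionally invokes reflexivity of $\Yright$ to instantiate the implication clause at the reached world, but these are details of the same argument.
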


\begin{proof} We proceed by induction on the size of $\Delta$:\\
\noindent If $\Delta$ is empty, then the implication is true;\\
\noindent (base) If $\Delta$ contains only one label, it must be a neighbourhood label:
\begin{itemize}
\item[-] $(\alpha^{\Sigma} \ra \beta^{\Omega})^{\circledast}$ means, by definition, that $\forall N \in \$(\chi) : \ob \m{W} , \$ , \m{V} , \chi , N \cb \models \alpha^{\Sigma} \ra \beta^{\Omega}$. Then we know that $\forall N \in \$(\chi) : \forall \lambda \in \m{A}(\chi) :$ if $\ob \m{W} , \$ , \m{V} , \lambda , N \cb \models \alpha^{\Sigma}$, then $\ob \m{W} , \$ , \m{V} , \lambda , N \cb \models \beta^{\Omega}$. Given an arbitrary $\lambda \in \m{A}(\chi)$, if $\ob \m{W} , \$ , \m{V} , \lambda \cb \models \alpha^{\Sigma,\circledast}$, then $\forall M \in \$(\lambda) : \ob \m{W} , \$ , \m{V} , \lambda , M \cb \models \alpha^{\Sigma}$. Since $\$(\lambda) = \$(\chi)$, by the definition of the accessibility relation, we have $\forall M \in \$(\chi) : \ob \m{W} , \$ , \m{V} , \lambda , M \cb \models \alpha^{\Sigma}$. So, by a conclusion above, we know that $\forall M \in \$(\chi) : \ob \m{W} , \$ , \m{V} , \lambda , M \cb \models \beta^{\Omega}$ and $\forall M \in \$(\lambda) : \ob \m{W} , \$ , \m{V} , \lambda , M \cb \models \beta^{\Omega}$, then $\ob \m{W} , \$ , \m{V} , \lambda \cb \models \beta^{\Omega,\circledast}$. In other words, $\alpha^{\Sigma,\circledast} \ra \beta^{\Omega,\circledast}$;
\item[-] $(\alpha^{\Sigma} \ra \beta^{\Omega})^{N}$ means, by definition, that $\sigma(N) \in \$(\chi)$ and $\ob \m{W} , \$ , \m{V} , \chi , \sigma(N) \cb \models \alpha^{\Sigma} \ra \beta^{\Omega}$. Then we know that $\forall \lambda \in \m{A}(\chi) :$ if $\ob \m{W} , \$ , \m{V} , \lambda , \sigma(N) \cb \models \alpha^{\Sigma}$, then $\ob \m{W} , \$ , \m{V} , \lambda , \sigma(N) \cb \models \beta^{\Omega}$. Given an arbitrary $\lambda \in \m{A}(\chi)$, if $\ob \m{W} , \$ , \m{V} , \lambda \cb \models \alpha^{\Sigma,N}$, then $\sigma(N) \in \$(\lambda)$ and $\ob \m{W} , \$ , \m{V} , \lambda , \sigma(N) \cb \models \alpha^{\Sigma}$. Since $\$(\lambda) = \$(\chi)$, we know that $\sigma(N) \in \$(\chi)$ and by a conclusion above, we know that $\ob \m{W} , \$ , \m{V} , \lambda , \sigma(N) \cb \models \beta^{\Omega}$, so $\ob \m{W} , \$ , \m{V} , \lambda \cb \models \beta^{\Omega,N}$. In other words, $\alpha^{\Sigma,N} \ra \beta^{\Omega,N}$;
\end{itemize}

\noindent (base) If $\Delta$ contains two labels, it may be $\{\circledast,\ast\}$, $\{N,\ast\}$, $\{\circledast,u\}$ or $\{N,u\}$. But we just need to look at the distributivity for the $\ast$ label and for world variables, because we have already seen the distributivity of the $\ra$ connective for the label $\circledast$ and for any neighbourhood variable.
\begin{itemize}
\item[-] $(\alpha^{\Sigma} \ra \beta^{\Omega})^{\ast,\circledast}$ means, by definition, that $\forall N \in \$(\chi) : \forall w \in N : \ob \m{W} , \$ , \m{V} , w \cb \models \alpha^{\Sigma} \ra \beta^{\Omega}$. Then we know that $\forall N \in \$(\chi) : \forall w \in N : \forall \lambda \in \m{A}(w) : \mbox{ if } \ob \m{W} , \$ , \m{V} , \lambda \cb \models \alpha^{\Sigma} \mbox{ then } \ob \m{W} , \$ , \m{V} , \lambda \cb \models \beta^{\Omega}$. Given an arbitrary $\lambda \in \m{A}(\chi)$, if we have $\ob \m{W} , \$ , \m{V} , \lambda \cb \models \alpha^{\Sigma,\ast,\circledast}$, then $\forall M \in \$(\lambda) : \forall z \in M : \ob \m{W} , \$ , \m{V} , z \cb \models \alpha^{\Sigma}$. Since $\$(\lambda) = \$(\chi)$, $\forall M \in \$(\chi) : \forall z \in M : \ob \m{W} , \$ , \m{V} , z \cb \models \alpha^{\Sigma}$. From $z \in \m{A}(z)$ and a conclusion above, we know that $\forall M \in \$(\chi) : \forall z \in M : \ob \m{W} , \$ , \m{V} , z \cb \models \beta^{\Omega}$ and $\forall M \in \$(\lambda) : \forall z \in M : \ob \m{W} , \$ , \m{V} , z \cb \models \beta^{\Omega}$, so $\ob \m{W} , \$ , \m{V} , \lambda \cb \models \beta^{\Omega,\ast,\circledast}$. It means that $\alpha^{\Sigma,\ast,\circledast} \ra \beta^{\Omega,\ast,\circledast}$;
\item[-] The proofs of $(\alpha^{\Sigma} \ra \beta^{\Omega})^{\ast,N}$, $(\alpha^{\Sigma} \ra \beta^{\Omega})^{u,\circledast}$ and $(\alpha^{\Sigma} \ra \beta^{\Omega})^{u,N}$ are analogous.
\end{itemize}

\noindent (induction) If $\alpha^{\Sigma} \ra \beta^{\Omega} \in \bs{F}_{w}$ and $s(\Delta) = n + 1$, then the scope must be a neighbourhood label and $\alpha^{\Sigma,\overline{\Delta}} \ra \beta^{\Omega,\overline{\Delta}}$ may be written as $\alpha^{\Sigma,\phi,\overline{\Delta'}} \ra \beta^{\Omega,\phi,\overline{\Delta'}}$, where $s(\Delta') = n$. Then, by the induction hypothesis, $\alpha^{\Sigma,\phi,\overline{\Delta'}} \ra \beta^{\Omega,\phi,\overline{\Delta'}} = (\alpha^{\Sigma,\phi} \ra \beta^{\Omega,\phi})^{\overline{\Delta'}}$. From the base assertions, $(\alpha^{\Sigma,\phi} \ra \beta^{\Omega,\phi})^{\overline{\Delta'}} = ((\alpha^{\Sigma} \ra \beta^{\Omega})^{\phi})^{\overline{\Delta'}} = (\alpha^{\Sigma} \ra \beta^{\Omega})^{\phi,\overline{\Delta'}} = (\alpha^{\Sigma} \ra \beta^{\Omega})^{\overline{\Delta}}$;\\
\noindent (induction) If $\alpha^{\Sigma} \ra \beta^{\Omega} \in \bs{F}_{n}$ and $s(\Delta) = n + 2$, then the scope must be a world label and $\alpha^{\Sigma,\overline{\Delta}} \ra \beta^{\Omega,\overline{\Delta}}$ may be written as $\alpha^{\Sigma,\phi,\Theta,\overline{\Delta'}} \ra \beta^{\Omega,\phi,\Theta,\overline{\Delta'}}$, where $s(\Delta') = n$. Then, by the induction hypothesis, $\alpha^{\Sigma,\phi,\Theta,\overline{\Delta'}} \ra \beta^{\Omega,\phi,\Theta,\overline{\Delta'}} = (\alpha^{\Sigma,\phi,\Theta} \ra \beta^{\Omega,\phi,\Theta})^{\overline{\Delta'}}$. By the base, $(\alpha^{\Sigma,\phi,\Theta} \ra \beta^{\Omega,\phi,\Theta})^{\overline{\Delta'}} = ((\alpha^{\Sigma} \ra \beta^{\Omega})^{\phi,\Theta})^{\overline{\Delta'}} = (\alpha^{\Sigma} \ra \beta^{\Omega})^{\phi,\Theta,\overline{\Delta'}} = (\alpha^{\Sigma} \ra \beta^{\Omega})^{\overline{\Delta}}$.\end{proof}

\begin{lemma} \label{I_resolution}
iPUC-ND without the rules $5, 11, 18, 20, 27, 28$ and $29$ preserves resolution.
\end{lemma}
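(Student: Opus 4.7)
\bigskip

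\noindent\textbf{Proof plan.} The plan is to argue rule-by-rule: for each of the inference rules of iPUC-ND that is \emph{not} among the excluded ones (so all rules except $5,11,18,20,27,28,29$), I would show that whenever the premises are in resolution form, so is the conclusion obtained by one application of that rule. Because ``preserves resolution'' is a local property of a single rule application, a global induction on derivations is not needed at this stage; what is needed is a careful enumeration of the remaining rules and, for each, a check that the labels, scopes and connectives appearing in the conclusion are still admissible witnesses of resolution. I would organise the case analysis by grouping rules according to the connective they introduce or eliminate ($\neg$, $\ra$, $\sqcap$, $\sqcup$, the counterfactual $\boxright$, and the label-scope rules), so that many cases collapse into a single schematic argument.

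The easy cases are the structural and propositional rules that do not touch labels: introduction and elimination of $\sqcap$ and $\sqcup$, as well as the intuitionistic absurdity rule that remains after dropping rule~$7$ (classical absurd), since these merely rearrange subformulas whose scopes are already resolved by hypothesis. The assumption-handling cases are equally routine: an assumption is in resolution form by definition, so rules that only discharge or copy assumptions transfer the property verbatim. For these I would simply record the observation and move on.

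The substantive cases are the rules governing $\ra$, $\neg$ and the counterfactual operator together with their label-scope decorations, because those are the rules where a scope $\overline{\Delta}$ must be pushed across a connective in the conclusion. Here I would invoke Lemma~\ref{I_lemmaRa} directly: it guarantees that $(\alpha^{\Sigma}\ra\beta^{\Omega})^{\overline{\Delta}}$ entails $\alpha^{\Sigma,\overline{\Delta}}\ra\beta^{\Omega,\overline{\Delta}}$ whenever the formula is well-formed and $\Delta$ carries no existential quantifier, which is exactly the distributivity needed to conclude that the label decoration of the resulting formula is already in resolved shape. For the rules involving the $\circledast$ scope or a neighbourhood label $N$, the corresponding base cases of Lemma~\ref{I_lemmaRa} give the distributivity step; for composite scopes, its inductive clauses do.

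The main obstacle, as I see it, is the careful bookkeeping for rules whose conclusion introduces a \emph{new} scope decoration (for example those that lift a formula under a $\circledast$ or a world variable $u$). There the difficulty is not conceptual but combinatorial: I must verify that the newly introduced scope is compatible with the scopes already present in the premises, that no existential quantifier is silently added (otherwise Lemma~\ref{I_lemmaRa} no longer applies and we fall into the excluded fragment), and that the eigenvariable side-conditions of the rule are compatible with the resolution conditions. I would therefore spend most of the proof on precisely these scope-introducing rules, tabulating for each the shape of premises and conclusion and exhibiting the decomposition of $\overline{\Delta}$ that Lemma~\ref{I_lemmaRa} consumes; the remaining rules can then be dispatched in a single concluding paragraph.
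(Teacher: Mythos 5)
Your overall route coincides with the paper's: Lemma~\ref{I_resolution} is indeed proved by a local, rule-by-rule semantic check that each remaining rule preserves the resolution relation, with no induction on derivations --- correctly so, since the excluded rules $5, 11, 18, 20$ are precisely the hypothesis-discharging ones, which the paper defers to the inductive Lemma~\ref{I_completeResolutionPUC-ND}. You also correctly identify Lemma~\ref{I_lemmaRa} as the key ingredient and correctly flag its no-existential-quantifier side condition. But the paper is far more economical than your plan: because the intuitionistic reinterpretation (Definition~\ref{I_modellingDefinition}) changes the satisfaction clauses \emph{only} for $\neg$ and $\ra$, every case except rule 12 is verbatim identical to lemma 8 of the classical PUC paper and is simply imported; the paper writes out rule 12 alone. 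Your plan to re-tabulate all the propositional, label and scope-introducing rules duplicates work the classical lemma already settles, and the ``main obstacle'' you anticipate (bookkeeping for scope-introducing rules) is not where the new content lies --- those rules and their semantic clauses are untouched by the move to iPUC.

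The genuine gap is in the one case that does require new work. For rule 12 you stop at Lemma~\ref{I_lemmaRa} and assert that distributivity alone leaves the conclusion ``in resolved shape''. It does not. After distributing, one has $\m{M} \models \alpha^{\Sigma,\overline{\Delta}} \ra \beta^{\Omega,\overline{\Delta}}$, and the intuitionistic clause for $\ra$ quantifies over accessible worlds: for all $\lambda \in \m{A}(\chi)$, if $\ob \m{W} , \$ , \m{V} , \lambda \cb \models \alpha^{\Sigma,\overline{\Delta}}$ then $\ob \m{W} , \$ , \m{V} , \lambda \cb \models \beta^{\Omega,\overline{\Delta}}$. To reach $\m{M} \models^{\Delta} \beta^{\Omega}$, i.e.\ $\m{M} \models \beta^{\Omega,\overline{\Delta}}$, one must instantiate this at $\lambda = \chi$, which is licensed only by the \emph{reflexivity} of the accessibility relation ($\chi \in \m{A}(\chi)$, Definition~\ref{I_accessibility}), and then discharge the antecedent using the second premise $\m{M} \models \alpha^{\Sigma,\overline{\Delta}}$. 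This instantiation via reflexivity is the sole genuinely intuitionistic step in the entire lemma, and your proposal never mentions it; relatedly, you do not note that Lemma~\ref{I_lemmaRa} itself depends on the condition $\$(\lambda) = \$(\chi)$ built into the accessibility relation. Without that step, your treatment of the $\ra$-elimination case does not close, and the rest of your plan, however carefully executed, would not repair it.
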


\begin{proof} Consider $\m{M} = \ob \m{W} , \$ , \m{V} , \chi \cb$. We present the proof for rule 12, because the proof for the other rules are equal as in lemma 8 of \cite{PUC-Logic-ArXiv}.\\
\noindent (12) If $\m{M} \models^{\Delta} \alpha^{\Sigma} \ra \beta^{\Omega}$, then $\m{M} \models (\alpha^{\Sigma} \ra \beta^{\Omega})^{\overline{\Delta}}$, then, by lemma \ref{I_lemmaRa}, $\m{M} \models \alpha^{\Sigma,\overline{\Delta}} \ra \beta^{\Omega,\overline{\Delta}}$. Then, by definition, $\forall \lambda \in \m{A}(\chi) : \mbox{ if } \ob \m{W} , \$ , \m{V} , \lambda \cb \models \alpha^{\Sigma,\overline{\Delta}}, \mbox { then } \ob \m{W} , \$ , \m{V} , \lambda \cb \models \beta^{\Omega,\overline{\Delta}}$. Since $\chi \in \m{A}(\chi)$, then, if $\ob \m{W} , \$ , \m{V} , \chi \cb \models \alpha^{\Sigma,\overline{\Delta}}, \mbox { then } \ob \m{W} , \$ , \m{V} , \chi \cb \models \beta^{\Omega,\overline{\Delta}}$. But we already know that $\ob \m{W} , \$ , \m{V} , \chi \cb \models \alpha^{\Sigma,\overline{\Delta}}$ and we can conclude $\m{M} \models^{\Delta} \beta^{\Omega}$.\end{proof}

\begin{lemma} \label{roadRunner}
If $\ob \m{W} , \$ , \m{V} , \chi \cb \models \alpha^{\Sigma}$, then $\forall \lambda \in \m{A}(\chi) : \ob \m{W} , \$ , \m{V} , \lambda \cb \models \alpha^{\Sigma}$
\end{lemma}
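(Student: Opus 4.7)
The plan is structural induction on the labeled formula $\alpha^{\Sigma}$, exploiting the three properties packed into Definition \ref{I_accessibility}: (i) preservation of atomic truth along $\Yright$, (ii) reflexivity and transitivity of $\Yright$, and (iii) $\$(\lambda)=\$(\chi)$ whenever $\chi \Yright \lambda$.

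\textbf{Base case.} If $\alpha$ is atomic, then $\chi \in \m{V}(\alpha)$ together with $\chi \Yright \lambda$ yields $\lambda \in \m{V}(\alpha)$ by the very definition of accessibility, so $\ob \m{W},\$,\m{V},\lambda \cb \models \alpha$.

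\textbf{Inductive step, propositional connectives.} The critical cases are negation (clause 14 above) and implication (clauses 5 and 17). For $\neg(\alpha^{\Sigma})$, suppose $\ob \m{W},\$,\m{V},\chi \cb \models \neg(\alpha^{\Sigma})$, so that every $\mu \in \m{A}(\chi)$ falsifies $\alpha^{\Sigma}$. Fix $\lambda \in \m{A}(\chi)$; then by transitivity any $\mu \in \m{A}(\lambda)$ also satisfies $\chi \Yright \mu$, hence falsifies $\alpha^{\Sigma}$, giving $\ob \m{W},\$,\m{V},\lambda \cb \models \neg(\alpha^{\Sigma})$. The same transitivity argument handles $\alpha^{\Sigma} \ra \beta^{\Omega}$ in both $\bs{F}_{n}$ and $\bs{F}_{w}$: the condition ``for all $\mu \in \m{A}(\chi)$\ldots'' already quantifies over a superset of $\m{A}(\lambda)$. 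Conjunction and disjunction go through by a direct appeal to the induction hypothesis on each conjunct/disjunct.

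\textbf{Inductive step, labels.} For the outer label clauses we use $\$(\lambda)=\$(\chi)$. When $\Sigma$ ends with $\circledast$ (universal over neighborhoods), the clause ranges over $\$(\chi)$; since $\$(\lambda)=\$(\chi)$, the same neighborhoods witness satisfaction at $\lambda$, and the body is shifted by applying the induction hypothesis one step inside. The analogous argument works for a specific neighborhood label $N$: the condition $\sigma(N)\in\$(\chi)$ transfers to $\sigma(N)\in\$(\lambda)$. For the world-centered labels ($\ast$ or a world variable $u$), the evaluation is re-anchored at a world that does not depend on $\chi$ except through $\$$, so again $\$(\lambda)=\$(\chi)$ combined with the induction hypothesis suffices.

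\textbf{Main obstacle.} The only nontrivial conceptual point is the implication/negation case, where one might worry that $\lambda$ could access worlds not accessible from $\chi$; transitivity of $\Yright$ closes this gap. The rest is case bookkeeping across the many clauses of Definition \ref{I_modellingDefinition}, each following the same pattern of reducing to the induction hypothesis via either transitivity of $\Yright$ or the equality $\$(\lambda)=\$(\chi)$.
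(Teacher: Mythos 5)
Your proof is correct and follows essentially the same route as the paper's: a structural induction using atomic preservation along $\Yright$, the invariance $\$(\lambda)=\$(\chi)$ for the label clauses, and transitivity for the $\ra$/$\neg$ clauses. The paper's own proof is just a terser version of this argument (it asserts that all subformulas evaluate identically across $\m{A}(\chi)$ without spelling out the cases), so your more explicit case analysis adds detail but no new idea.
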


\begin{proof}
If $\ob \m{W} , \$ , \m{V} , \chi \cb \models \alpha$, $\alpha$ atomic, then, by the definition of the accessibility relation, then $\forall \lambda \in \m{A}(\chi) : \ob \m{W} , \$ , \m{V} , \lambda \cb \models \alpha$. If $\ob \m{W} , \$ , \m{V} , \chi \cb \models \alpha^{\Sigma}$, $\Sigma$ non empty, then $\forall \lambda \in \m{A}(\chi) : \ob \m{W} , \$ , \m{V} , \lambda \cb \models \alpha^{\Sigma}$ because $\forall \lambda \in \m{A}(\chi) : \$(\chi) = \$(\lambda)$. The systems of neighbourhoods being the same makes equal the evaluation of the formula. So, if $\ob \m{W} , \$ , \m{V} , \chi \cb \models \alpha^{\Sigma}$, then $\forall \lambda \in \m{A}(\chi) : \ob \m{W} , \$ , \m{V} , \lambda \cb \models \alpha^{\Sigma}$ because the evaluation of all subformulas of $\alpha^{\Sigma}$ are the same at all worlds of $\m{A}(\chi)$, including $\chi$ itself.
\end{proof}

\begin{lemma} \label{I_completeResolutionPUC-ND}
iPUC-ND preserves resolution.
\end{lemma}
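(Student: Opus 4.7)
The plan is to extend Lemma \ref{I_resolution} by handling the seven rules that were excluded there, namely $5, 11, 18, 20, 27, 28, 29$. Since Lemma \ref{I_resolution} already covers every other rule, the full iPUC-ND statement reduces to showing that each of these seven rules individually preserves resolution. I would follow exactly the pattern used for rule 12 in the previous lemma: fix a model $\m{M}=\ob \m{W},\$,\m{V},\chi\cb$, assume the premises of the rule are resolved under the current label context $\Delta$, unfold the $\models^{\Delta}$ notation via the meaning of superscripted labels, and use Lemma \ref{I_lemmaRa} together with the semantic clauses of Definition \ref{I_modellingDefinition} to derive the conclusion's resolution.

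The crucial new ingredient compared to Lemma \ref{I_resolution} is Lemma \ref{roadRunner}, which was stated precisely in preparation for this step. The rules that were previously excluded are exactly those whose intuitionistic soundness requires passing from validity at $\chi$ to validity at arbitrary $\lambda\in\m{A}(\chi)$, or vice-versa. Concretely, for each such rule I would proceed as follows: take the assumption that the premises resolve, rewrite $\models^{\Delta}$ as satisfaction of the superscripted formula in $\m{M}$, and when the rule introduces an implication or a negation whose semantic clause quantifies over $\m{A}(\chi)$, invoke Lemma \ref{roadRunner} to push the already-established satisfaction of the relevant subformulas from $\chi$ up to an arbitrary accessible $\lambda$. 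This yields the satisfaction condition demanded by the corresponding clause (clauses 2, 5, 14 and 17 in Definition \ref{I_modellingDefinition}) at $\lambda$, from which the conclusion of the rule resolves as required.

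After that, for rules that introduce or eliminate a label in the superscript (typically the ones dealing with $\circledast$, with neighbourhood variables $N$, or with world labels), I would simply combine Lemma \ref{I_lemmaRa} with the base semantic definitions, as was done for rule 12, to rewrite $(\alpha^{\Sigma}\ra \beta^{\Omega})^{\overline{\Delta}}$ into $\alpha^{\Sigma,\overline{\Delta}}\ra \beta^{\Omega,\overline{\Delta}}$ and then move back. The cases involving $\neg$ proceed symmetrically: unfold clause 14 or clause 2 of Definition \ref{I_modellingDefinition}, which is a universal statement over $\m{A}(\chi)$, and again appeal to Lemma \ref{roadRunner} for persistence.

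The main obstacle I anticipate is the intuitionistic introduction rules for $\ra$ and $\neg$, where the argument has to leave the world $\chi$ and reason uniformly over all $\lambda\in\m{A}(\chi)$; this is the exact situation for which Lemma \ref{roadRunner} was designed, so the work amounts to carefully matching each rule to its semantic clause and checking that no use of the classical absurd rule (rule 7, already removed from iPUC-ND) sneaks in. Since all other rules are handled verbatim as in Lemma 8 of \cite{PUC-Logic-ArXiv}, the only genuinely new content is the seven intuitionistic cases, each of which is a short semantic verification of the above shape.
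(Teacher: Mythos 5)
You correctly identify the two load-bearing ingredients for the genuinely new case: Lemma \ref{roadRunner} supplies the persistence needed when the semantic clause for $\ra$ (and $\neg$) quantifies over all $\lambda \in \m{A}(\chi)$, and this is exactly how the paper discharges the hypothesis in the $\ra$-introduction rule (rule 11): from $\m{H} \models \beta^{\Omega}$ one propagates $\beta^{\Omega}$ to every world of $\m{A}(z)$ and reads off $\m{H} \models \alpha^{\Sigma} \ra \beta^{\Omega}$ directly from the clause. Your treatment of rule 5 by distributing $\vee$ over the label context and splitting into the two subderivations also matches the paper.

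However, there is a concrete gap in your plan for rules 18 and 20. These are the \emph{existential elimination} rules (for $\bullet$ and $\circledcirc$), and you propose to handle them by ``combining Lemma \ref{I_lemmaRa} with the base semantic definitions'' as label-manipulation cases. That cannot work: Lemma \ref{I_lemmaRa} is a statement about distributing the connective $\ra$ over a label sequence and is explicitly restricted to $\Delta$ \emph{without existential quantifiers}, so it says nothing about $\bullet$ or $\circledcirc$. The paper's argument for these two rules is the eigenvariable argument: unfold the existential at a model/template reached via $\multimap_{s(\Delta)}$, observe that the variable $u$ (resp.\ $N$) occurs nowhere else in the derivation so it may be taken as a denotation of the existential witness, conclude $\m{M} \models^{\Delta,u} \alpha^{\Sigma}$ (resp.\ $\m{M} \models^{\Delta,N} \alpha^{\Sigma}$) via rule 13, and only then run the subderivation. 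Your proposal omits this entirely. A second, more structural omission: the seven excluded rules are precisely the ones that carry subderivations and discharge hypotheses, so the verification cannot be a flat rule-by-rule check appended to Lemma \ref{I_resolution}; the paper organizes it as an induction whose base case assumes the subderivations $\Pi$, $\Pi_{1}$, $\Pi_{2}$ contain only rules covered by Lemma \ref{I_resolution}, with the inductive case handled as in lemma 28 of the cited PUC-Logic paper. Without that induction (and without saying anything about rules 27--29), the argument as sketched does not close.
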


\begin{proof} During the proof $\m{M} = \ob \m{W} , \$ , \m{V} , \chi \cb$. We present the proof for each remaining rule of the iPUC-ND inside an induction. Base argument:
\begin{enumerate}
\setcounter{enumi}{4}
\item If $\m{M} \models^{\Delta} \alpha^{\Sigma} \vee \beta^{\Omega}$, then $\m{M} \models (\alpha^{\Sigma} \vee \beta^{\Omega})^{\overline{\Delta}}$, then, by lemma 5 of \cite{PUC-Logic-ArXiv}, $\m{M} \models \alpha^{\Sigma,\overline{\Delta}} \vee \beta^{\Omega,\overline{\Delta}}$, then, by definition, $\m{M} \models \alpha^{\Sigma,\overline{\Delta}}$ or $\m{M} \models \beta^{\Omega,\overline{\Delta}}$. This means, by definition, that $\m{M} \models^{\Delta} \alpha^{\Sigma}$ or $\m{M} \models^{\Delta} \beta^{\Omega}$. So, if $\Pi_{1}$ and $\Pi_{2}$ only contains the rules from lemma \ref{I_resolution}, $\m{M} \models^{\Theta} \gamma^{\Lambda}$ in both cases, because of the preservation of the resolution relation. And, for that conclusion, the hypothesis are no longer necessary and may be discharged;
\setcounter{enumi}{10}
\item If $\Pi$ only contains the rules of lemma \ref{I_resolution}, then, from the hypothesis that $\m{M} \models^{\Delta} \alpha^{\Sigma}$, the derivation gives us $\m{M} \models^{\Delta} \beta^{\Omega}$. If $\beta^{\Omega} \in \bs{F}_{n}$, then, by the fitting relation and lemma 2 of \cite{PUC-Logic-ArXiv}, we know that $s(\Delta)$ is even. If we take some model $\m{H} = \ob \m{W} , \$ , \m{V} , z \cb$, such that $\m{M} \multimap_{s(\Delta)} \m{H}$ and $\m{H} \models \beta^{\Omega}$, then, by lemma \ref{roadRunner}, $\forall w \in \m{A}(z) : \ob \m{W} , \$ , \m{V} , w \cb \models \beta^{\Omega}$ and, by definition, $\m{H} \models \alpha^{\Sigma} \ra \beta^{\Omega}$. So, by definition, $\beta^{\Omega} \models_{\m{M}:s(\Delta)} \alpha^{\Sigma} \ra \beta^{\Omega}$, which means, by lemma 21 of \cite{PUC-Logic-ArXiv}, that $\m{M} \models^{\Delta} \alpha^{\Sigma} \ra \beta^{\Omega}$. If $\beta^{\Omega} \in \bs{F}_{w}$, then, by the fitting relation and lemma 2 of \cite{PUC-Logic-ArXiv}, we know that $s(\Delta)$ is odd. If we take some template $\m{T} = \ob \m{W} , \$ , \m{V} , z , L\cb$, such that $\m{M} \multimap_{s(\Delta)} \m{T}$ and $\m{T} \models \beta^{\Omega}$, then, by lemma \ref{roadRunner}, $\forall w \in \m{A}(z) : \ob \m{W} , \$ , \m{V} , w , L\cb \models \beta^{\Omega}$ and, by definition, $\m{T} \models \alpha^{\Sigma} \ra \beta^{\Omega}$. So, by definition, $\beta^{\Omega} \models_{\m{M}:s(\Delta)} \alpha^{\Sigma} \ra \beta^{\Omega}$, which means, by lemma 21 of \cite{PUC-Logic-ArXiv}, that $\m{M} \models^{\Delta} \alpha^{\Sigma} \ra \beta^{\Omega}$. So the hypothesis is unnecessary and may be discharged;
\setcounter{enumi}{17}
\item If $\m{M} \models^{\Delta,\bullet} \alpha^{\Sigma}$, then, by the rule 14, $\m{M} \models^{\Delta} \alpha^{\Sigma,\bullet}$. From $\alpha^{\Sigma,\bullet} \in \bs{F}_{w}$, the fitting relation and lemma 2 of \cite{PUC-Logic-ArXiv}, we know that $s(\Delta)$ is odd. If we take some template $\m{T} = \ob \m{W} , \$ , \m{V} , z , N \cb$, such that $\m{M} \multimap_{s(\Delta)} \m{T}$ and $\m{T} \models \alpha^{\Sigma,\bullet}$, then, $N \in \$(z)$ and $\exists w \in N : \ob \m{W} , \$ , \m{V} , w \cb \models \alpha^{\Sigma}$. Since the variable $u$ occurs nowhere else in the derivation, $u$ can be taken as a denotation of the given existential and we conclude that $\ob \m{W} , \$ , \m{V} , u \cb \models \alpha^{\Sigma}$, what means that $\m{T} \models \alpha^{\Sigma,u}$. So, by definition, $\alpha^{\Sigma,\bullet} \models_{\m{M}:s(\Delta)} \alpha^{\Sigma,u}$, which means, by lemma 21 of \cite{PUC-Logic-ArXiv}, that $\m{M} \models^{\Delta} \alpha^{\Sigma,u}$. We conclude, using the rule 13, that $\m{M} \models^{\Delta,u} \alpha^{\Sigma}$. If $\Pi$ only contains rules of the lemma \ref{I_resolution}, then we can conclude $\m{M} \models^{\Theta} \beta^{\Omega}$. Then we can discharge the hypothesis because we know that any denotation of the existential may provide the same conclusion;
\setcounter{enumi}{19}
\item If $\m{M} \models^{\Delta,\circledcirc} \alpha^{\Sigma}$, then, by the rule 14, $\m{M} \models^{\Delta} \alpha^{\Sigma,\circledcirc}$. From $\alpha^{\Sigma,\circledcirc} \in \bs{F}_{n}$, the fitting relation and lemma 2 of \cite{PUC-Logic-ArXiv}, we know that $s(\Delta)$ is even. If we take some model $\m{H} = \ob \m{W} , \$ , \m{V} , z \cb$, such that $\m{M} \multimap_{s(\Delta)} \m{H}$ and $\m{H} \models \alpha^{\Sigma,\circledcirc}$, then $\exists M \in \$(z) : \ob \m{W} , \$ , \m{V} , z , M \cb \models \alpha^{\Sigma}$. Since the variable $N$ occurs nowhere else in the derivation, $N$ can be taken as a denotation of the given existential and we conclude that $\ob \m{W} , \$ , \m{V} , z , N \cb \models \alpha^{\Sigma}$, what means that $\m{H} \models \alpha^{\Sigma,N}$. So, by definition, $\alpha^{\Sigma,\circledcirc} \models_{\m{M}:s(\Delta)} \alpha^{\Sigma,N}$, which means, by lemma 21 of \cite{PUC-Logic-ArXiv}, that $\m{M} \models^{\Delta} \alpha^{\Sigma,N}$. We conclude, using the rule 13, that $\m{M} \models^{\Delta,N} \alpha^{\Sigma}$. If $\Pi$ only contains rules of the lemma \ref{I_resolution}, then we can conclude $\m{M} \models^{\Theta} \beta^{\Omega}$. Then we can discharge the hypothesis because we know that any denotation of the existential may provide the same conclusion.
\end{enumerate}
\noindent Inductive case: the same argument as in lemma 28 of \cite{PUC-Logic-ArXiv}.\end{proof}

\begin{theorem} \label{I_soundness}
$\Gamma \vdash \alpha^{\Sigma}$ implies $\Gamma \models \alpha^{\Sigma}$ (Soundness).
\end{theorem}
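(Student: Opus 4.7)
The plan is to derive soundness directly from the resolution-preservation lemma (Lemma \ref{I_completeResolutionPUC-ND}), since essentially all the semantic work has already been done there. First I would unpack the definitions: $\Gamma \models \alpha^{\Sigma}$ means that for every model $\m{M} = \ob \m{W} , \$ , \m{V} , \chi \cb$ satisfying every $\gamma \in \Gamma$, we also have $\m{M} \models \alpha^{\Sigma}$; and the resolution judgement $\m{M} \models^{\Delta} \alpha^{\Sigma}$ specializes, when $\Delta$ is empty, to plain $\m{M} \models \alpha^{\Sigma}$ (since then $\overline{\Delta}$ is empty and $\alpha^{\Sigma,\overline{\Delta}} = \alpha^{\Sigma}$).

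Then I would proceed by induction on the structure of the iPUC-ND derivation witnessing $\Gamma \vdash \alpha^{\Sigma}$. Fix an arbitrary $\m{M}$ with $\m{M} \models \gamma$ for all $\gamma \in \Gamma$; then $\m{M} \models^{\emptyset} \gamma$ for every $\gamma \in \Gamma$. The inductive step at each rule application is exactly the content of Lemma \ref{I_completeResolutionPUC-ND} (together with Lemma \ref{I_resolution} for the rules not requiring hypothesis discharge): the semantic counterpart of each rule transforms premises of the form $\m{M} \models^{\Delta_i} \varphi_i^{\Sigma_i}$ into the analogous conclusion, and discharges the corresponding local hypotheses. Iterating this along the tree, we arrive at $\m{M} \models^{\emptyset} \alpha^{\Sigma}$, i.e.\ $\m{M} \models \alpha^{\Sigma}$. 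Since $\m{M}$ was arbitrary, $\Gamma \models \alpha^{\Sigma}$.

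The only subtlety I would flag is bookkeeping for rules that open and discharge auxiliary hypotheses (the disjunction elimination case, the $\ra$-introduction case, and the two existential-elimination-style rules for $\bullet$ and $\circledcirc$): for these I would appeal explicitly to the corresponding clauses proved in Lemma \ref{I_completeResolutionPUC-ND}, which already show that the hypothesis-satisfaction assumption can be legitimately discharged using Lemma \ref{roadRunner} and the freshness of the eigenvariables $u$ and $N$. Beyond that, no new semantic reasoning is required; soundness is essentially a corollary of resolution-preservation, and I would not grind through the rule-by-rule recomputation already carried out in Lemmas \ref{I_resolution} and \ref{I_completeResolutionPUC-ND}.

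The main obstacle, such as it is, is conceptual rather than technical: making sure the reader sees that the resolution judgement $\models^{\Delta}$ is the right inductive strengthening of $\models$, so that what looks like a one-line corollary genuinely carries the whole induction. If there is any real risk, it lies in the eigenvariable side-conditions for the rules indexed $18$ and $20$; I would double-check that the freshness conventions adopted in the natural deduction system match those used in the resolution preservation proof so that the Skolem-style choice of $u$ and $N$ goes through uniformly across the induction.
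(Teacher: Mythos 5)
Your proposal is correct and follows essentially the same route as the paper: fix an arbitrary model of $\Gamma$, observe that satisfaction coincides with the resolution judgement at the empty context, and invoke Lemma \ref{I_completeResolutionPUC-ND} to push this through the derivation to $\alpha^{\Sigma}$. The paper additionally notes in passing that the fitting restriction forces $\alpha^{\Sigma}$ and the formulas of $\Gamma$ to have neighbourhood characteristic, but otherwise the argument is the same corollary-of-resolution-preservation you describe.
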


\begin{proof}
The fitting restriction of the rules of iPUC-ND ensures that $\alpha^{\Sigma}$ has neighbourhood characteristic because it appears in the empty context. The same conclusion follows for every formula of $\Gamma$. The derivability assures that there is a derivation that concludes $\alpha^{\Sigma}$ and takes as open hypothesis a subset of $\Gamma$, which we call $\Gamma'$. If we take a model $\m{M}$ that satisfies every formula of $\Gamma$, then it also satisfies every formula of $\Gamma'$. So, $\m{M} \models \gamma^{\Theta}$, for every $\gamma^{\Theta} \in \Gamma'$. But this means, by definition, that for every wff of $\Gamma'$ the resolution relation holds with the empty context. Then, from lemma \ref{I_completeResolutionPUC-ND}, we know that $\m{M} \models \alpha^{\Sigma}$. So, every model that satisfies every formula of $\Gamma$ also satisfies $\alpha^{\Sigma}$ and, by definition, $\Gamma \models \alpha^{\Sigma}$.\end{proof}

We use prime theories to prove completeness. The reader can see the intuitionistic logic case of this way of proving completeness in \cite{vanDalen}. We recall from \cite{PUC-Logic-ArXiv} the following definitions:

\begin{definition}
Given $\alpha^{\Sigma} \in \bs{F}_{n}$, if $\alpha^{\Sigma}$ has no variables in the attributes of its subformulas nor any subformula of the shape $\shneg N$ or $\shpos N$, then $\alpha^{\Sigma} \in \bs{S}_{n}$. By analogy, we can construct $\bs{S}_{w}$ from $\bs{F}_{w}$.
\end{definition} 

\begin{definition}
Given $\Gamma \subset \bs{S}_{n}$ ($\Gamma \subset \bs{S}_{w}$), we say that $\Gamma$ is \textit{n-inconsistent} (\textit{w-inconsistent}) if $\Gamma \vdash \bot_{n}$ ($\Gamma \vdash^{N}_{N} \bot_{w}$, where $N$ is a neighbourhood variable that does not occur in $\Gamma$) and \textit{n-consistent} (\textit{w-consistent}) if $\Gamma \not\vdash \bot_{n}$ ($\Gamma \not\vdash^{N}_{N} \bot_{w}$).
\end{definition}

\begin{lemma} \label{I_modelForSet}
Given $\Gamma \subset \bs{S}_{n}$ ($\Gamma \subset \bs{S}_{w}$), if there is a model (template) that satisfies every formula of $\Gamma$, then $\Gamma$ is n-consistent.
\end{lemma}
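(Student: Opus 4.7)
The plan is to prove this by a direct application of soundness (Theorem \ref{I_soundness}) together with the observation that the falsum constants $\bot_n$ and $\bot_w$ are not satisfied by any model or template, respectively.

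First I would handle the n-case. Assume $\m{M} = \ob \m{W},\$,\m{V},\chi \cb$ is a model such that $\m{M} \models \gamma^{\Theta}$ for every $\gamma^{\Theta} \in \Gamma$. Suppose for contradiction that $\Gamma$ is n-inconsistent, i.e.\ $\Gamma \vdash \bot_{n}$. By Theorem \ref{I_soundness} (soundness), $\Gamma \models \bot_{n}$, hence $\m{M} \models \bot_{n}$. But by the definition of $\models$ no model can satisfy $\bot_{n}$, yielding the required contradiction. Therefore $\Gamma \not\vdash \bot_{n}$, i.e.\ $\Gamma$ is n-consistent.

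For the w-case I would argue analogously, but now starting from a template $\m{T} = \ob \m{W},\$,\m{V},\chi,L \cb$ satisfying every formula of $\Gamma \subset \bs{S}_w$. Supposing $\Gamma \vdash^{N}_{N} \bot_{w}$ where $N$ does not occur in $\Gamma$, I would appeal again to soundness: the resolution-preservation argument used in Theorem \ref{I_soundness} goes through with the context carrying the neighbourhood label $N$, so any template satisfying $\Gamma$ must satisfy $\bot_{w}$. Since $N$ is fresh, we may instantiate the neighbourhood variable to the distinguished $L$ of $\m{T}$ (or to any $N \in \$(\chi)$), and conclude $\m{T} \models \bot_{w}$, which is impossible by definition.

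The only subtle step is the w-case: one has to be careful that the notation $\vdash^{N}_{N}$ really does correspond to validity of $\bot_{w}$ in the template supplied by the hypothesis, which in turn requires that the freshness of $N$ lets us read off semantic entailment in every template extending $\m{M}$ by a neighbourhood choice. Granting this (which is just the semantic counterpart of how the fresh neighbourhood variable is treated in the rules producing $\vdash^{N}_{N}$), the proof reduces entirely to soundness plus the unsatisfiability of the falsum constants, and no induction or additional machinery is needed.
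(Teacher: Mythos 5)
Your proposal is correct and follows essentially the same route as the paper: the paper's proof simply defers to the classical argument while substituting Theorem \ref{I_soundness} for the classical soundness theorem, and that classical argument is exactly your combination of soundness with the unsatisfiability of $\bot_{n}$ (resp.\ $\bot_{w}$). Your explicit care about the fresh neighbourhood variable in the $\vdash^{N}_{N}$ case is a detail the paper leaves implicit under ``analogous,'' but it does not change the approach.
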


\begin{proof}
This proof is analogous to the classical case, with the addition that here we use theorem \ref{I_soundness} instead of theorem 31 of \cite{PUC-Logic-ArXiv}.\end{proof}

\begin{lemma} \label{I_consistency}
Given $\Gamma \subset \bs{S}_{n}$ ($\Gamma \subset \bs{S}_{w}$), if $\Gamma \cup \{\phi^{\Theta}\} \vdash \bot_{n}$, then $\Gamma \vdash \neg \phi^{\Theta}$.
\end{lemma}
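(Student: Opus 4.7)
The plan is to read this lemma as the standard intuitionistic negation-introduction step, which survives the passage from PUC-ND to iPUC-ND because only the classical absurd rule (rule 7) has been removed. Concretely, I would take the derivation $\Pi$ witnessing $\Gamma \cup \{\phi^{\Theta}\} \vdash \bot_{n}$ and extend it by one application of the $\neg$-introduction rule of iPUC-ND, discharging every open occurrence of the hypothesis $\phi^{\Theta}$. The resulting derivation has endformula $\neg \phi^{\Theta}$, its open assumptions lie in $\Gamma$, and it therefore witnesses $\Gamma \vdash \neg \phi^{\Theta}$.

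Before applying the rule I would verify the fitting side-conditions. Since $\phi^{\Theta} \in \bs{S}_{n}$, it is in particular well-formed in $\bs{F}_{n}$, so $\neg \phi^{\Theta}$ is well-formed in $\bs{F}_{n}$ and of the correct sort to appear in the empty scope context used by $\vdash$. The hypothesis $\phi^{\Theta}$ and the derived $\bot_{n}$ live at the same (empty) scope, so the label bookkeeping required by the $\neg$-introduction rule is trivially satisfied; no rule specific to the classical fragment is invoked along the way.

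For the parenthetical $\bs{S}_{w}$ case the argument is entirely analogous: one starts from a derivation of $\bot_{n}$ from $\Gamma \cup \{\phi^{\Theta}\}$, applies the corresponding $\neg$-introduction rule at the scope dictated by the w-consistency definition (namely the scope $\tfrac{N}{N}$ with $N$ a fresh neighbourhood variable), and discharges $\phi^{\Theta}$ to obtain $\Gamma \vdash \neg \phi^{\Theta}$ in the appropriate scope. The only care needed is to keep $N$ disjoint from the variables occurring in $\Gamma$ and $\phi^{\Theta}$, exactly as in the definition of w-consistency.

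The main potential obstacle is checking that the fitting restriction of the iPUC-ND $\neg$-introduction rule does not demand a particular scope parity that fails in the empty context; but since $\phi^{\Theta}$ and $\bot_{n}$ are both neighbourhood formulas at the same scope, the rule applies directly and the lemma follows by a single extension of the given derivation, with no appeal to any rule from the list $5, 11, 18, 20, 27, 28, 29$ nor to the removed classical absurd rule.
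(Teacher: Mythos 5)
Your proposal is correct and takes essentially the same route as the paper: extend the given derivation of $\bot_{n}$ by one discharging application of an introduction rule (the paper phrases it as $\ra$-introduction, since $\neg \phi^{\Theta} \equiv \phi^{\Theta} \ra \bot_{n}$), obtaining a derivation of $\neg\phi^{\Theta}$ from hypotheses in $\Gamma$. Your additional checks of the fitting conditions and of the fresh-variable bookkeeping in the $\bs{S}_{w}$ case are harmless elaborations of what the paper leaves implicit.
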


\begin{proof}
The assumption implies that there is a derivation $\m{D}$ with with hypothesis in $\Gamma \cup \{\phi^{\Theta}\}$ and conclusion $\bot_{n}$. If we apply the rule $\ra$-introduction and eliminate all occurrences of $\phi^{\Theta}$ as hypothesis, then we obtain a derivation with hypothesis in $\Gamma$ and conclusion $\neg \phi^{\Theta}$. The same argument holds for $\Gamma \subset \bs{S}_{w}$.\end{proof}

\begin{definition} \label{I_primeTheory}
$\Gamma \subset \bs{S}_{n}$ ($\Gamma \subset \bs{S}_{w}$) is a \textit{prime n-theory} (prime w-theory) iff (i) $\Gamma \vdash \alpha^{\Sigma}$ ($\Gamma \vdash^{N}_{N} \alpha^{\Sigma}$) implies $\alpha^{\Sigma} \in \Gamma$ and (ii) if $\alpha^{\Sigma} \vee \beta^{\Omega} \in \Gamma$ implies $\alpha^{\Sigma} \in \Gamma$ or $\beta^{\Omega} \in \Gamma$.
\end{definition}

\newpage

\begin{lemma} \label{I_subMax}
Given $\Gamma \cup \{\alpha^{\Sigma}\} \subset \bs{S}_{n}$ ($\Gamma \cup \{\alpha^{\Sigma}\} \subset \bs{S}_{w}$), if $\Gamma \not\vdash \alpha^{\Sigma}$ ($\Gamma \not\vdash^{N}_{N} \alpha^{\Sigma}$), then there is a prime n-theory (w-theory) $\Gamma'$, such that $\Gamma \subset \Gamma'$ and $\Gamma' \not\vdash \alpha^{\Sigma}$.
\end{lemma}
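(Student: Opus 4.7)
My plan is a Lindenbaum-style construction adapted to the labelled setting, following the intuitionistic pattern recalled from van Dalen. I will describe it for the $\bs{S}_{n}$ case; the $\bs{S}_{w}$ case is identical, with $\vdash^{N}_{N}$ (for a neighbourhood variable $N$ chosen fresh for $\Gamma \cup \{\alpha^{\Sigma}\}$ and kept fixed throughout) replacing $\vdash$. Since $\bs{S}_{n}$ is countable, enumerate it as $\phi_{0}^{\Theta_{0}}, \phi_{1}^{\Theta_{1}}, \ldots$ and define a chain by $\Gamma_{0} = \Gamma$ and
\[
\Gamma_{k+1} \;=\; \begin{cases} \Gamma_{k} \cup \{\phi_{k}^{\Theta_{k}}\} & \text{if } \Gamma_{k} \cup \{\phi_{k}^{\Theta_{k}}\} \not\vdash \alpha^{\Sigma}, \\ \Gamma_{k} & \text{otherwise.} \end{cases}
\]
Set $\Gamma' = \bigcup_{k} \Gamma_{k}$. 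By construction $\Gamma \subset \Gamma'$ and each $\Gamma_{k} \not\vdash \alpha^{\Sigma}$.

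Next I would verify that $\Gamma' \not\vdash \alpha^{\Sigma}$: any iPUC-ND derivation uses only finitely many open hypotheses, so a derivation from $\Gamma'$ of $\alpha^{\Sigma}$ would already be a derivation from some $\Gamma_{k}$, contradicting the construction.

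Then I would establish the two defining clauses of a prime n-theory. For deductive closure, suppose $\Gamma' \vdash \beta^{\Omega}$ but $\beta^{\Omega} \notin \Gamma'$. Writing $\beta^{\Omega} = \phi_{k}^{\Theta_{k}}$, the fact that $\phi_{k}^{\Theta_{k}}$ was rejected at stage $k+1$ gives $\Gamma_{k} \cup \{\beta^{\Omega}\} \vdash \alpha^{\Sigma}$; by monotonicity $\Gamma' \cup \{\beta^{\Omega}\} \vdash \alpha^{\Sigma}$, and composing with the derivation of $\beta^{\Omega}$ from $\Gamma'$ yields $\Gamma' \vdash \alpha^{\Sigma}$, a contradiction. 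For primality (the disjunction clause), suppose $\beta^{\Omega} \vee \gamma^{\Lambda} \in \Gamma'$ with neither disjunct in $\Gamma'$; then both were rejected at their respective enumeration stages, so for $m$ large enough we have $\Gamma_{m} \cup \{\beta^{\Omega}\} \vdash \alpha^{\Sigma}$ and $\Gamma_{m} \cup \{\gamma^{\Lambda}\} \vdash \alpha^{\Sigma}$. Applying $\vee$-elimination (rule 5) with major premise $\beta^{\Omega} \vee \gamma^{\Lambda} \in \Gamma_{m}$ discharges both hypotheses, yielding $\Gamma_{m} \vdash \alpha^{\Sigma}$, again a contradiction.

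The main obstacle is a bookkeeping one rather than a conceptual one: I must check that the labelled machinery of iPUC-ND does not obstruct the standard $\vee$-elimination and hypothesis-discharge steps used in the disjunction-property argument, and in the $\bs{S}_{w}$ case I need that the fixed fresh variable $N$ used in $\vdash^{N}_{N}$ stays fresh throughout the construction; this is ensured by choosing $N$ to avoid the countably many labels occurring in the enumeration, or equivalently by enumerating only formulas of $\bs{S}_{w}$ whose labels avoid $N$ (this is harmless because $N$ is a bound variable name). With those checks in place, the three properties $\Gamma \subset \Gamma'$, $\Gamma' \not\vdash \alpha^{\Sigma}$, and $\Gamma'$ being a prime n-theory (respectively w-theory) together establish the lemma.
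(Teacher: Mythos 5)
Your proposal is correct and follows essentially the same route as the paper: a Lindenbaum-style chain starting from $\Gamma$, with $\Gamma' \not\vdash \alpha^{\Sigma}$ obtained from the finiteness of derivations and primality obtained from $\vee$-elimination applied to the two rejected disjuncts. The only (harmless) variation is that you saturate over the whole enumeration of $\bs{S}_{n}$, getting deductive closure from maximality plus cut, whereas the paper only processes derivable disjunctions and recovers closure via the $\psi \vee \psi$ trick; both yield a prime theory extending $\Gamma$ that does not prove $\alpha^{\Sigma}$.
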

\begin{proof}
According to lemma 37 of \cite{PUC-Logic-ArXiv}, we may have a list $\varphi_{0}, \varphi_{1}, \ldots$ of all wff in $\bs{S}_{n}$. We build a non-decreasing sequence of sets $\Gamma_{i}$ such that the union is a prime theory. We put $\Gamma_{0} = \Gamma$. Then we take the first disjunctive sentence that has not been treated and $\Gamma_{n} \vdash \varphi_{0} \vee \varphi_{1}$: $\Gamma_{n+1} = \Gamma_{n} \cup \{\varphi_{0}\}$ if $\Gamma_{n} \cup \{\varphi_{0}\} \not\vdash \alpha^{\Sigma}$, $\Gamma_{n+1} = \Gamma_{n} \cup \{\varphi_{1}\}$ otherwise. It cannot be the case that $\Gamma_{n} \cup \{\varphi_{0}\} \vdash \alpha^{\Sigma}$ and $\Gamma_{n} \cup \{\varphi_{1}\} \vdash \alpha^{\Sigma}$, because, in this case, $\Gamma_{n} \vdash \alpha^{\Sigma}$ by $\vee$-elimination. So, $\Gamma' = \bigcup\{\Gamma_{n} \; | \; n \geq 0 \}$.\\[5pt]
(a) $\Gamma' \not\vdash \alpha^{\Sigma}$: by induction, since $\Gamma_{0} \not\vdash \alpha^{\Sigma}$ and $\Gamma_{n+1} \not\vdash \alpha^{\Sigma}$ by the definition of the induction step. (b) $\Gamma'$ is a prime theory: (i) if $\psi_{1} \vee \psi_{2} \in \Gamma'$, then take the least number $k$ such that $\Gamma_{k} \vdash \psi_{1} \vee \psi_{2}$. So, $\psi_{1} \vee \psi_{2}$ cannot have been treated at a stage before $k$ and $\Gamma_{h} \vdash \psi_{1} \vee \psi_{2}$, for $h \geq k$. At some point $\psi_{1} \vee \psi_{2}$ must be treated at a stage $h \geq k$. Then, $\psi_{1} \in \Gamma_{h+1}$ or $\psi_{2} \in \Gamma_{h+1}$ and, by definition, $\psi_{1} \in \Gamma'$ or $\psi_{2} \in \Gamma'$. (ii) if $\Gamma' \vdash \psi$, then $\Gamma' \vdash \psi \vee \psi$, then by (i) $\psi \in \Gamma'$.\\[5pt] The same argument holds for sentences in $\bs{S}_{w}$.\end{proof}

\begin{definition}
Given a prime n-theory $\Gamma$ and a prime w-theory $\Lambda$, we say that $\Gamma$ \textit{accepts} $\Lambda$ ($\Gamma \propto \Lambda$) if $\alpha^{\Sigma} \in \Lambda$ implies $\alpha^{\Sigma,\circledcirc} \in \Gamma$. If $\alpha^{\Sigma} \in \Gamma$ implies $\alpha^{\Sigma,\bullet} \in \Lambda$, then $\Lambda \propto \Gamma$.
\end{definition}

\begin{definition}
Given prime w-theories $\Gamma$ and $\Lambda$, we say that $\Gamma$ \textit{subordinates} $\Lambda$ ($\Lambda \sqsubset \Gamma$) iff $\alpha^{\Sigma,\bullet} \in \Lambda$ implies $\alpha^{\Sigma,\bullet} \in \Gamma$ and $\alpha^{\Sigma,\ast} \in \Gamma$ implies $\alpha^{\Sigma,\ast} \in \Lambda$.
\end{definition}

\begin{lemma} \label{I_consistentModel}
If $\Gamma \subset \bs{S}_{n}$ is n-consistent, then there is a model $\m{M}$, such that $\m{M} \models \alpha^{\Sigma}$, for every $\alpha^{\Sigma} \in \Gamma$.
\end{lemma}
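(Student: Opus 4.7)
The plan is to adapt the Henkin-style completeness proof (prime theories in place of maximal consistent sets) to the nested-sphere semantics of iPUC. Since $\Gamma$ is n-consistent, $\Gamma \not\vdash \bot_{n}$, so Lemma~\ref{I_subMax} applied with $\alpha^{\Sigma} = \bot_{n}$ yields a prime n-theory $\Gamma^{\ast} \supseteq \Gamma$ with $\Gamma^{\ast} \not\vdash \bot_{n}$. It suffices to construct a model whose distinguished world satisfies every formula of $\Gamma^{\ast}$.

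I would then build a canonical model $\m{M} = \ob \m{W}, \$, \m{V}, \Gamma^{\ast} \cb$ by taking $\m{W}$ to be the collection of all prime n-theories, setting $\m{V}(p) = \{\Delta \in \m{W} : p \in \Delta\}$ for each atomic $p$, and defining accessibility as set inclusion: $\Delta \Yright \Delta'$ iff $\Delta \subseteq \Delta'$. Each $N \in \$(\Delta)$ is to be taken as a set of prime w-theories $\Lambda$ related to $\Delta$ by the accept relation ($\Delta \propto \Lambda$ and $\Lambda \propto \Delta$) and closed under subordination $\sqsubset$. The system $\$$ must be chosen constant along $\subseteq$-chains so that the condition $\$(\lambda) = \$(\chi)$ of Definition~\ref{I_accessibility} is automatic, and so that inclusion between prime n-theories indeed gives a reflexive and transitive accessibility with the required monotonicity of atomic valuations.

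The heart of the argument is a truth lemma by induction on the complexity of $\alpha^{\Sigma}$: for every prime n-theory $\Delta$, $\ob \m{W}, \$, \m{V}, \Delta \cb \models \alpha^{\Sigma}$ iff $\alpha^{\Sigma} \in \Delta$, with an analogous statement for prime w-theories and templates. The atomic case is by definition of $\m{V}$; the cases of $\wedge$ and $\vee$ use closure under derivability together with the primeness clause of Definition~\ref{I_primeTheory}; for $\ra$ and $\neg$ the forward direction is immediate from the iPUC-ND rules, while the reverse direction uses Lemma~\ref{I_consistency} to turn $\alpha \ra \beta \notin \Delta$ into $\Delta \cup \{\alpha\} \not\vdash \beta$, and then Lemma~\ref{I_subMax} to extract a prime $\Delta' \supseteq \Delta \cup \{\alpha\}$ with $\beta \notin \Delta'$, giving an accessible counterexample. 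The labelled cases ($\alpha^{\Sigma,\circledast}$, $\alpha^{\Sigma,N}$, $\alpha^{\Sigma,\ast}$, and their existential companions) are translated, via the accept and subordinate relations, into statements about the prime w-theories inhabiting the neighbourhoods of $\$(\Delta)$.

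The main obstacle will be the construction of $\$$ itself: it must be rich enough to validate every labelled formula of $\Gamma^{\ast}$ through constituent prime w-theories, yet constant along the accessibility relation, and these two demands have to be reconciled simultaneously with the propositional truth lemma at every $\Delta \in \m{W}$. A second delicate point is the existential labels $\bullet$ and $\circledcirc$: the truth lemma for formulas so labelled requires that prime theories contain explicit witnessing denotations, so the enumeration underlying Lemma~\ref{I_subMax} must be arranged to supply a fresh witness for every existentially labelled subformula, in the spirit of a Henkin witness construction.
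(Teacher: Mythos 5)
Your skeleton coincides with the paper's own proof: extend $\Gamma$ to a prime n-theory $\Gamma'$ via Lemma~\ref{I_subMax}, take the worlds to be the prime n-theories with accessibility given by inclusion and $\m{V}$ given by membership, take the neighbourhoods to be prime w-theories organised by the relations $\propto$ and $\sqsubset$, and prove a truth lemma by induction on the structure of the formula. The propositional cases you describe ($\wedge$, $\vee$, $\ra$, $\neg$) are handled exactly as you indicate, using closure under derivability, primeness, and Lemmas~\ref{I_consistency} and~\ref{I_subMax} for the conditional.

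The gap is in the step you yourself flag as ``the main obstacle'': producing the prime w-theories that inhabit $\$(\chi_{\Gamma'})$ and witness the existentially labelled formulas. You propose to fix this by rearranging the enumeration underlying Lemma~\ref{I_subMax} so that it supplies ``a fresh witness for every existentially labelled subformula'' in Henkin style. That cannot work as stated: by the definition of $\bs{S}_{n}$, the formulas of a prime n-theory contain no variables in their attributes, so there is no syntactic object inside $\Gamma'$ that could serve as a witnessing denotation for a label $\circledcirc$ or $\bullet$; the witness has to be a whole prime w-theory, that is, a set of formulas, not a term adjoined to the language. The paper's proof does the real work at exactly this point: for each $\beta^{\Omega,\circledcirc} \in \Gamma'$ it builds the witness $\Upsilon$ directly, enumerating the w-formulas $\rho$ with $\rho^{\circledcirc} \in \Gamma'$, starting from $\Upsilon_{0} = \{\beta^{\Omega}\}$ and adjoining $\rho_{i}$ exactly when the labelled conjunction $(\varphi_{0}\wedge\ldots\wedge\varphi_{m}\wedge\rho_{i})^{\circledcirc}$ of the formulas already collected remains in $\Gamma'$; it then verifies, by explicit derivations in iPUC-ND, that $\Upsilon$ is w-consistent, prime, closed under derivation, and accepted by $\Gamma'$ (with a dual construction for $\gamma^{\Theta,\bullet}$ at the level of w-theories, and with the universal labels $\circledast$ and $\ast$ requiring no witness at all). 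None of this is routine, and it is precisely the content your proposal defers; until it is supplied, the truth lemma for the labelled cases, and hence the lemma itself, is not established.
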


\begin{proof}
By definition, $\Gamma \not\vdash \bot_{n}$ and, by lemma \ref{I_subMax}, $\Gamma$ is contained in a prime n-theory $\Gamma'$, such that $\Gamma' \not\vdash \bot_{n}$. We take every prime n-theory $\Psi$ as a representation of one world of $\m{W}$, denoted by $\chi_{\Psi}$. Every prime w-theory will be seen as a set of worlds that may be a neighbourhood. We take $\propto$ as the nested neighbourhood function $\$$ and $\sqsubset$ as the total order among neighbourhoods. We take the subset relation among prime n-theories as the accessibility relation among worlds. To build the truth evaluation function $\m{V}$, we require, for every prime n-theory $\Psi$ and for every $\alpha$ atomic: (a) $\chi_{\Psi} \in \m{V}(\alpha)$ if $\alpha \in \Psi$; (b) $\chi_{\Psi} \not\in \m{V}(\alpha)$ if $\alpha \not\in \Psi$. If we take $\m{M} = \ob \m{W} , \$ , \m{V} , \chi_{\Gamma'} \cb$, then, for every wff $\alpha^{\Sigma} \in \Gamma'$, $\m{M} \models \alpha^{\Sigma}$. We proceed by induction on the structure of $\alpha^{\Sigma}$:\\[5pt]
\noindent (Base) If $\alpha^{\Sigma}$ is atomic, $\m{M} \models \alpha^{\Sigma}$ iff $\alpha^{\Sigma} \in \Gamma'$, by the definition of $\m{V}$;\\[5pt]
\noindent $\alpha^{\Sigma} = \beta^{\Omega} \wedge \gamma^{\Theta}$. $\m{M} \models \alpha^{\Sigma}$ iff $\m{M} \models \beta^{\Omega}$ and $\m{M} \models \gamma^{\Theta}$ iff (induction hypothesis) $\beta^{\Omega} \in \Gamma'$ and $\gamma^{\Theta} \in \Gamma'$. We conclude that $\alpha^{\Sigma} \in \hat{\Gamma}$ by a single application of the $\wedge$-introduction rule and the fact that a prime theory is closed by derivability. Conversely $\alpha^{\Sigma} \in \Gamma'$ iff $\beta^{\Omega} \in \Gamma'$ and $\gamma^{\Theta} \in \Gamma'$ by $\wedge$-elimination and the fact that a prime theory is closed by derivability. The rest follows by the induction hypothesis;\\[5pt]
\noindent $\alpha^{\Sigma} = \beta^{\Omega} \vee \gamma^{\Theta}$. By the definition of prime theory, $\beta^{\Omega} \in \Gamma'$ or $\gamma^{\Theta} \in \Gamma'$ and we proceed by induction.\\[5pt]
\noindent $\alpha^{\Sigma} = \beta^{\Omega} \ra \gamma^{\Theta}$. $\m{M} \not\models \alpha^{\Sigma}$ iff $\exists \chi_{\Psi} \in \m{A}(\chi_{\Gamma'})$ such that $\ob \m{W} , \$ , \m{V} , \chi_{\Psi} \cb \models \beta^{\Omega}$ and $\ob \m{W} , \$ , \m{V} , \chi_{\Psi} \cb \not\models \gamma^{\Theta}$ iff (induction hypothesis) $\beta^{\Omega} \in \Psi$ and $\gamma^{\Theta} \not\in \Psi$. So, by the fact that a prime theory is closed by derivability and by $\ra$-elimination, we conclude that $\beta^{\Omega} \ra \gamma^{\Theta} \not\in \Psi$. By the definition of accessibility among worlds, $\Gamma' \subset \Psi$ and $\beta^{\Omega} \ra \gamma^{\Theta} \not\in \Gamma'$. Conversely, $\beta^{\Omega} \ra \gamma^{\Theta} \not\in \Gamma'$ implies that, if $\beta^{\Omega} \in \Gamma'$, then $\gamma^{\Theta} \not\in \Gamma'$, because, on the contrary, for every prime n-theory $\Psi$, such that $\Gamma' \subset \Psi$, $\gamma^{\Theta} \in \Psi$ and by i.h. $\ob \m{W} , \$ , \m{V} , \chi_{\Psi} \cb \models \gamma^{\Theta}$. So, by the definition of accessibility of worlds, for every $\chi_{\Psi} \in \m{A}(\chi_{\Gamma'}) : \ob \m{W} , \$ , \m{V} , \chi_{\Psi} \cb \models \gamma^{\Theta}$ and, by definition, $\m{M} \models \beta^{\Omega} \ra \gamma^{\Theta}$. From $\beta^{\Omega} \in \Gamma'$ we know that by i.h. $\m{M} \models \beta^{\Omega}$. So, from $\chi_{\Gamma'} \in \m{A}(\chi_{\Gamma'})$ and $\m{M} \models \beta^{\Omega} \ra \gamma^{\Theta}$, we get $\m{M} \models \gamma^{\Theta}$ and by i.h. $\gamma^{\Theta} \in \Gamma'$. But, in this case, $\beta^{\Omega} \ra \gamma^{\Theta} \in \Gamma'$ by a single $\ra$introduction and the fact that a prime theory is closed by derivability;\\[5pt]
\noindent $\alpha^{\Sigma} = \neg \beta^{\Omega}$. This case is treated by the previous case, because $\neg \beta^{\Omega} \equiv \beta^{\Omega} \ra \bot_{n}$;\\[5pt]
\noindent $\alpha^{\Sigma} = \beta^{\Omega,\circledast}$. This wff do not require the existence of a prime w-theory to represent a neighbourhood in which $\beta^{\Omega}$ holds, on the contrary, it only requires that there is no prime w-theory, accepted by $\Upsilon$, in which $\gamma^{\Theta}$ does not hold.\\[5pt]
\noindent $\alpha^{\Sigma} = \beta^{\Omega,\circledcirc}$. We take any enumeration $\rho_{1},\rho_{2},\ldots$ in $\bs{F}_{w}$, such that $\rho_{i}^{\circledcirc} \in \Gamma'$ and $\rho_{0} = \beta^{\Omega}$. We construct a w-theory by following:
\begin{itemize}
\item $\Upsilon_{0} = \{\beta^{\Omega}\}$;
\item $\Upsilon_{i} = \Upsilon_{i-1} \cup \{\rho_{i}\}$, if $(\varphi_{0}\wedge\ldots\wedge\varphi_{m}\wedge\rho{i})^{\circledcirc}\in \Gamma'$, where $\Upsilon_{i-1} = \{\varphi_{0},\ldots,\varphi_{m}\}$. $\Upsilon_{i} = \Upsilon_{i-1}$ otherwise;
\item $\Upsilon = \bigcup_{n \in \mathbb{N}}\Upsilon_{n}$.
\end{itemize}
Now we must prove the following:
\begin{enumerate}
\item $\Upsilon$ is a w-theory:
\begin{itemize}
\item $\Upsilon \not\vdash \bot_{w}$. A derivation in iPUC build only with wff from $\bs{F}_{w}$ must have all rules with odd sized context. To do so, we choose a neighbourhood variable, that do not occur during the derivation, to be the leftmost label at every context that appears during the derivation. This new variable may be understood as the representation of the neighbourhood in which the inference is made. So, if $\Upsilon$ is not w-consistent, then, given some derivation $\m{D}$ that concludes $\bot_{w}$ from wff of $\Upsilon$, there is an index $n \in \mathbb{N}$, such that $\Upsilon_{n}$ contains all wff that appears in $\m{D}$. So, by definition, that there is a wff $(\varphi_{0}\wedge\ldots\wedge\varphi_{m})^{\circledcirc} \in \Gamma'$, such that $\varphi_{0},\ldots,\varphi_{m}$ represent all wff of $\Upsilon_{n}$ and $\bot_{n} \in \Gamma'$ by the derivation below, which is a contradiction by the definition of $\Gamma'$.\begin{prooftree}
\AxiomC{$(\varphi_{0}\wedge\ldots\wedge\varphi_{m})^{\circledcirc}$}
\UnaryInfC{$(\varphi_{0}\wedge\ldots\wedge\varphi_{m})^{\circledcirc}$} \RightLabel{$\circledcirc$}
\UnaryInfC{$\varphi_{0}\wedge\ldots\wedge\varphi_{m}$}
\AxiomC{[$\varphi_{0}\wedge\ldots\wedge\varphi_{m}$]} \RightLabel{$N$}
\UnaryInfC{$\m{D}'$} \RightLabel{$N$}
\UnaryInfC{$\bot_{w}$}
\UnaryInfC{$\bot_{n}$}
\BinaryInfC{$\bot_{n}$}\end{prooftree}
The derivation $\m{D}'$ is obtained from $\m{D}$ by 1) adding $\varphi_{0}\wedge\ldots\wedge\varphi_{m}$ in the place of the hypothesis $\varphi_{i}$ and the following $\wedge$-eliminations to recover $\varphi_{i}$; 2) binding the variable $N$ added to produce $\m{D}$.
\item If $\varpi \vee \vartheta \in \Upsilon$, then $\exists i \in \mathbb{N} : \rho_{i} = \varpi \vee \vartheta \in \Upsilon_{i}$ and $(\varphi_{0}\wedge\ldots\wedge\varphi_{m}\wedge(\varpi \vee \vartheta))^{\circledcirc}\in \Gamma'$. By the distribution of $\vee$ over $\wedge$, we know that $((\varphi_{0}\wedge\ldots\wedge\varphi_{m}\wedge\varpi) \vee (\varphi_{0}\wedge\ldots\wedge\varphi_{m}\wedge\vartheta))^{\circledcirc} \in \Gamma'$. By the following derivation ($\Pi_{1}$), we know that $(\varphi_{0}\wedge\ldots\wedge\varphi_{m}\wedge\varpi)^{\circledcirc} \vee (\varphi_{0}\wedge\ldots\wedge\varphi_{m}\wedge\vartheta)^{\circledcirc}\in \Gamma'$ and, by the definition of prime theory, $(\varphi_{0}\wedge\ldots\wedge\varphi_{m}\wedge\varpi)^{\circledcirc} \in \Gamma'$ or $(\varphi_{0}\wedge\ldots\wedge\varphi_{m}\wedge\vartheta)^{\circledcirc}\in \Gamma'$. We proceed now considering the possibilities of the order provided by the enumeration $\rho$. Without loss of generality, we suppose that $\rho_{k} = \varpi$, $\rho_{l} = \vartheta$ and $k < l$. In the case $l < i$, we are done, because, from $(\varphi_{0}\wedge\ldots\wedge\varphi_{m}\wedge\varpi)^{\circledcirc} \in \Gamma'$ or $(\varphi_{0}\wedge\ldots\wedge\varphi_{m}\wedge\vartheta)^{\circledcirc}\in \Gamma'$, we can eliminate the extra formulas to recover $\Upsilon_{k}$ or $\Upsilon_{l}$ by the rule of construction of the set $\Upsilon$. In the case $i < m$, at the moment of verification of $\rho_{m}$, we know that for all formulas of $\Upsilon_{m-1} = \{\varphi_{0},\ldots,(\varpi \vee \vartheta),\ldots,\varphi_{t}\}$ we have a conjunction $(\varphi_{0}\wedge\ldots\wedge(\varpi \vee \vartheta)\wedge\ldots\wedge\varphi_{t})^{\circledcirc} \in \Gamma'$. By a derivation similar to $\Pi_{1}$, we know that $(\varphi_{0}\wedge\ldots\wedge\varpi\wedge\ldots\wedge\varphi_{t})^{\circledcirc} \in \Gamma'$ or $(\varphi_{0}\wedge\ldots\wedge\vartheta\wedge\ldots\wedge\varphi_{t})^{\circledcirc} \in \Gamma'$ and by means of a $\vee$-introduction in another derivation we know that $(\varphi_{0}\wedge\ldots\wedge(\varpi \vee \vartheta)\wedge\ldots\wedge\varphi_{t}\wedge\varpi)^{\circledcirc} \in \Gamma'$ or $(\varphi_{0}\wedge\ldots\wedge(\varpi \vee \vartheta)\wedge\ldots\wedge\varphi_{t}\wedge\vartheta)^{\circledcirc} \in \Gamma'$. If we have $(\varphi_{0}\wedge\ldots\wedge(\varpi \vee \vartheta)\wedge\ldots\wedge\varphi_{t}\wedge\varpi)^{\circledcirc} \in \Gamma'$, then $\Upsilon_{m}$ and $\Upsilon$ contain $\varpi$. If we have $(\varphi_{0}\wedge\ldots\wedge(\varpi \vee \vartheta)\wedge\ldots\wedge\varphi_{t}\wedge\vartheta)^{\circledcirc} \in \Gamma'$, at the moment of verification of $\rho_{l}$, we make the distribution of the disjunction $(\varpi \vee \vartheta)$ again. Given $\Upsilon_{l-1} = \{\varphi_{0},\ldots,(\varpi \vee \vartheta),\ldots,\varphi_{s}\}$ we have a conjunction $(\varphi_{0}\wedge\ldots\wedge(\varpi \vee \vartheta)\wedge\ldots\wedge\varphi_{s})^{\circledcirc} \in \Gamma'$. By a derivation similar to $\Pi_{1}$, we know that $(\varphi_{0}\wedge\ldots\wedge\varpi\wedge\ldots\wedge\varphi_{s})^{\circledcirc} \in \Gamma'$ or $(\varphi_{0}\wedge\ldots\wedge\vartheta\wedge\ldots\wedge\varphi_{s})^{\circledcirc} \in \Gamma'$ and by means of a $\vee$-introduction in another derivation we know that $(\varphi_{0}\wedge\ldots\wedge(\varpi \vee \vartheta)\wedge\ldots\wedge\varphi_{s}\wedge\varpi)^{\circledcirc} \in \Gamma'$ or $(\varphi_{0}\wedge\ldots\wedge(\varpi \vee \vartheta)\wedge\ldots\wedge\varphi_{s}\wedge\vartheta)^{\circledcirc} \in \Gamma'$. If $(\varphi_{0}\wedge\ldots\wedge(\varpi \vee \vartheta)\wedge\ldots\wedge\varphi_{s}\wedge\vartheta)^{\circledcirc} \in \Gamma'$, then we are done. If $(\varphi_{0}\wedge\ldots\wedge(\varpi \vee \vartheta)\wedge\ldots\wedge\varphi_{s}\wedge\varpi)^{\circledcirc} \in \Gamma'$, then we have a contradiction, because we supposed that $(\varphi_{0}\wedge\ldots\wedge(\varpi \vee \vartheta)\wedge\ldots\wedge\varphi_{t}\wedge\varpi)^{\circledcirc} \in \Gamma'$ were not the case and we can recover it by eliminating the extra formulas. The the case $m < i < l$ is analogous.
\end{itemize}
\end{enumerate}

\begin{landscape}
\begin{prooftree}
\AxiomC{$((\varphi_{0}\wedge\ldots\wedge\varphi_{m}\wedge\varpi) \vee (\varphi_{0}\wedge\ldots\wedge\varphi_{m}\wedge\vartheta))^{\circledcirc}$}
\UnaryInfC{$((\varphi_{0}\wedge\ldots\wedge\varphi_{m}\wedge\varpi) \vee (\varphi_{0}\wedge\ldots\wedge\varphi_{m}\wedge\vartheta))^{\circledcirc}$} \RightLabel{$\circledcirc$}
\UnaryInfC{$(\varphi_{0}\wedge\ldots\wedge\varphi_{m}\wedge\varpi) \vee (\varphi_{0}\wedge\ldots\wedge\varphi_{m}\wedge\vartheta)$}
\AxiomC{$(\varphi_{0}\wedge\ldots\wedge\varphi_{m}\wedge\varpi) \vee (\varphi_{0}\wedge\ldots\wedge\varphi_{m}\wedge\vartheta)$} \RightLabel{$N$}
\UnaryInfC{$\Pi_{2}$}
\UnaryInfC{$(\varphi_{0}\wedge\ldots\wedge\varphi_{m}\wedge\varpi)^{\circledcirc} \vee (\varphi_{0}\wedge\ldots\wedge\varphi_{m}\wedge\vartheta)^{\circledcirc}$} \LeftLabel{$\bs{\Pi_{1}}$}
\BinaryInfC{$(\varphi_{0}\wedge\ldots\wedge\varphi_{m}\wedge\varpi)^{\circledcirc} \vee (\varphi_{0}\wedge\ldots\wedge\varphi_{m}\wedge\vartheta)^{\circledcirc}$}
\end{prooftree}

\hfill\\

\begin{prooftree}
\AxiomC{$(\varphi_{0}\wedge\ldots\wedge\varphi_{m}\wedge\varpi) \vee (\varphi_{0}\wedge\ldots\wedge\varphi_{m}\wedge\vartheta)$} \RightLabel{$N$}
\UnaryInfC{$(\varphi_{0}\wedge\ldots\wedge\varphi_{m}\wedge\varpi) \vee (\varphi_{0}\wedge\ldots\wedge\varphi_{m}\wedge\vartheta)$}
\AxiomC{[$\varphi_{0}\wedge\ldots\wedge\varphi_{m}\wedge\varpi$]} \RightLabel{$N$}
\UnaryInfC{$\varphi_{0}\wedge\ldots\wedge\varphi_{m}\wedge\varpi$}
\UnaryInfC{$\Pi_{3}$}
\UnaryInfC{$(\varphi_{0}\wedge\ldots\wedge\varphi_{m}\wedge\varpi)^{\circledcirc} \vee (\varphi_{0}\wedge\ldots\wedge\varphi_{m}\wedge\vartheta)^{\circledcirc}$}
\AxiomC{[$\varphi_{0}\wedge\ldots\wedge\varphi_{m}\wedge\vartheta$]} \RightLabel{$N$}
\UnaryInfC{$\varphi_{0}\wedge\ldots\wedge\varphi_{m}\wedge\vartheta$}
\UnaryInfC{$\Pi_{4}$}
\UnaryInfC{$(\varphi_{0}\wedge\ldots\wedge\varphi_{m}\wedge\varpi)^{\circledcirc} \vee (\varphi_{0}\wedge\ldots\wedge\varphi_{m}\wedge\vartheta)^{\circledcirc}$}
\BinaryInfC{$(\varphi_{0}\wedge\ldots\wedge\varphi_{m}\wedge\varpi)^{\circledcirc} \vee (\varphi_{0}\wedge\ldots\wedge\varphi_{m}\wedge\vartheta)^{\circledcirc}$}\LeftLabel{$\bs{\Pi_{2}}$}
\BinaryInfC{$(\varphi_{0}\wedge\ldots\wedge\varphi_{m}\wedge\varpi)^{\circledcirc} \vee (\varphi_{0}\wedge\ldots\wedge\varphi_{m}\wedge\vartheta)^{\circledcirc}$}
\end{prooftree}

\hfill\\

\begin{prooftree}
\AxiomC{$((\varphi_{0}\wedge\ldots\wedge\varphi_{m}\wedge\varpi) \vee (\varphi_{0}\wedge\ldots\wedge\varphi_{m}\wedge\vartheta))^{\circledcirc}$}
\UnaryInfC{$((\varphi_{0}\wedge\ldots\wedge\varphi_{m}\wedge\varpi) \vee (\varphi_{0}\wedge\ldots\wedge\varphi_{m}\wedge\vartheta))^{\circledcirc}$} \RightLabel{$\circledcirc$}
\UnaryInfC{$((\varphi_{0}\wedge\ldots\wedge\varphi_{m}\wedge\varpi) \vee (\varphi_{0}\wedge\ldots\wedge\varphi_{m}\wedge\vartheta))$}
\AxiomC{$\varphi_{0}\wedge\ldots\wedge\varphi_{m}\wedge\varpi$} \RightLabel{$N$}
\UnaryInfC{$\varphi_{0}\wedge\ldots\wedge\varphi_{m}\wedge\varpi$} \RightLabel{$\circledcirc$}
\BinaryInfC{$\varphi_{0}\wedge\ldots\wedge\varphi_{m}\wedge\varpi$}
\UnaryInfC{$(\varphi_{0}\wedge\ldots\wedge\varphi_{m}\wedge\varpi)^{\circledcirc}$}
 \LeftLabel{$\bs{\Pi_{3}}$}
\UnaryInfC{$(\varphi_{0}\wedge\ldots\wedge\varphi_{m}\wedge\varpi)^{\circledcirc} \vee (\varphi_{0}\wedge\ldots\wedge\varphi_{m}\wedge\vartheta)^{\circledcirc}$}
\end{prooftree}
$\bs{\Pi_{4}}$ is analogous.
\end{landscape}

\begin{enumerate}
\setcounter{enumi}{1}
\item[$\phantom{.}$]
\begin{itemize}
\item $\Upsilon$ is closed by derivation. Given a derivation $\m{D}$ of wff in $\bs{F}_{w}$ that concludes $\varpi$ and that has a new neighbourhood variable $N$ like described above. Then there is $i \in \mathbb{N}$, such that $\Upsilon_{i} = \{\varphi_{0},\ldots,\varphi_{m}\}$ contains all hypothesis in $\m{D}$. It means that $(\varphi_{0}\wedge\ldots\wedge\varphi_{m})^{\circledcirc} \in \Gamma'$ and, by the following derivation, we know that $(\varphi_{0}\wedge\ldots\wedge\varphi_{m}\wedge\varpi)^{\circledcirc} \in \Gamma'$ because $\Gamma'$ is a prime theory. If $\rho_{k} = \varpi$ and $k \leq i$, then $\varpi \in \Upsilon$, because we can obtain $(\varphi_{0}\wedge\ldots\wedge\varphi_{j}\wedge\varpi)^{\circledcirc} \in \Gamma'$, where $\Upsilon_{k-1} = \{\varphi_{0},\ldots,\varphi_{j}\}$, by eliminating the extra formulas. If $k > i$, then, from $(\varphi_{0}\wedge\ldots\wedge\varphi_{j})^{\circledcirc}$, we can obtain, by a similar derivation, $(\varphi_{0}\wedge\ldots\wedge\varphi_{j}\wedge\varpi)^{\circledcirc}$ because all the wff of $\Upsilon_{i}$ are in $\Upsilon_{k}$.
\end{itemize}
\end{enumerate}\hspace*{3cm}\begin{footnotesize}\AxiomC{$(\varphi_{0}\wedge\ldots\wedge\varphi_{m})^{\circledcirc}$}
\UnaryInfC{$(\varphi_{0}\wedge\ldots\wedge\varphi_{m})^{\circledcirc}$} \RightLabel{$\circledcirc$}
\UnaryInfC{$\varphi_{0}\wedge\ldots\wedge\varphi_{m}$}
\AxiomC{$(\varphi_{0}\wedge\ldots\wedge\varphi_{m})^{\circledcirc}$}
\UnaryInfC{$(\varphi_{0}\wedge\ldots\wedge\varphi_{m})^{\circledcirc}$} \RightLabel{$\circledcirc$}
\UnaryInfC{$\varphi_{0}\wedge\ldots\wedge\varphi_{m}$}
\AxiomC{[$\varphi_{0}\wedge\ldots\wedge\varphi_{m}$]} \RightLabel{$N$}
\UnaryInfC{$\varphi_{0}\wedge\ldots\wedge\varphi_{m}$}
\AxiomC{[$\varphi_{0}\wedge\ldots\wedge\varphi_{m}$]} \RightLabel{$N$}
\UnaryInfC{$\m{D}'$} \RightLabel{$N$}
\UnaryInfC{$\varpi$} \RightLabel{$N$}
\BinaryInfC{$\varphi_{0}\wedge\ldots\wedge\varphi_{m}\wedge\varpi$} \RightLabel{$\circledcirc$}
\BinaryInfC{$\varphi_{0}\wedge\ldots\wedge\varphi_{m}\wedge\varpi$} \RightLabel{$\circledcirc$}
\BinaryInfC{$\varphi_{0}\wedge\ldots\wedge\varphi_{m}\wedge\varpi$}
\UnaryInfC{$(\varphi_{0}\wedge\ldots\wedge\varphi_{m}\wedge\varpi)^{\circledcirc}$}\DisplayProof\end{footnotesize}

\begin{enumerate}
\item[$\phantom{.}$]
\begin{itemize}
\item[$\phantom{.}$] The derivation $\m{D}'$ is obtained from $\m{D}$ by 1) adding $\varphi_{0}\wedge\ldots\wedge\varphi_{m}$ in the place of the hypothesis $\varphi_{i}$ and the following $\wedge$-eliminations to recover $\varphi_{i}$; 2) binding the variable $N$ added to $\m{D}$.
\end{itemize}
\end{enumerate}

\begin{enumerate}
\setcounter{enumi}{1}
\item $\Upsilon$ is accepted by $\Gamma'$: for every wff $\varpi \in \Upsilon$, there is $i \in \mathbb{N}$, such that $\varpi \in \Upsilon_{i}$, what means that there is a conjunction such that $(\varphi_{0}\wedge\ldots\wedge\varphi_{m}\wedge\varpi)^{\circledcirc} \in \Gamma'$. We obtain $\varpi^{\circledcirc} \in \Gamma'$ by a derivation with some $\wedge$-eliminations.

\item The wff of the form $\gamma^{\Theta,\bullet} \in \Upsilon$ demand a prime n-theory to represent a world in which $\gamma^{\Theta}$ holds. The argument is, like in the classical situation, analogous to the case of the wff $\beta^{\Omega,\circledcirc} \in \Gamma'$.

\item The wff of the form $\gamma^{\Theta,\ast} \in \Upsilon$ do not require the existence of a prime n-theory to represent a world in which $\gamma^{\Theta}$ holds, on the contrary, it only requires that there is no prime n-theory, accepted by $\Upsilon$, in which $\gamma^{\Theta}$ does not hold.
\end{enumerate}\end{proof}

\begin{corollary} \label{I_trick}
$\Gamma \not\vdash \alpha^{\Sigma}$ iff there is a model $\m{M}$, such that $\m{M} \models \phi^{\Theta}$, for every $\phi^{\Theta} \in \Gamma$, and $\m{M} \not\models \alpha^{\Sigma}$.
\end{corollary}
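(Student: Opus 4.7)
The plan is to prove the two directions separately, with the $(\Leftarrow)$ direction being an immediate consequence of soundness and the $(\Rightarrow)$ direction being essentially a repackaging of Lemmas \ref{I_subMax} and \ref{I_consistentModel}.

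For $(\Leftarrow)$, I would argue contrapositively: suppose $\Gamma \vdash \alpha^{\Sigma}$. By Theorem \ref{I_soundness}, $\Gamma \models \alpha^{\Sigma}$. Hence any model $\m{M}$ that satisfies every $\phi^{\Theta} \in \Gamma$ must also satisfy $\alpha^{\Sigma}$, contradicting the hypothesised existence of a countermodel. This direction is a one-line invocation of soundness.

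For $(\Rightarrow)$, assume $\Gamma \not\vdash \alpha^{\Sigma}$. First I would apply Lemma \ref{I_subMax} to extend $\Gamma$ to a prime n-theory $\Gamma'$ with $\Gamma \subset \Gamma'$ and $\Gamma' \not\vdash \alpha^{\Sigma}$. Note that $\Gamma'$ is in particular n-consistent, since otherwise $\Gamma' \vdash \bot_{n}$ and any formula (including $\alpha^{\Sigma}$) would be derivable via $\bot_{n}$-elimination. Next I would apply the construction of Lemma \ref{I_consistentModel} to $\Gamma'$, obtaining the canonical model $\m{M} = \ob \m{W}, \$, \m{V}, \chi_{\Gamma'} \cb$, where worlds are prime n-theories, neighbourhoods are prime w-theories, $\$$ is the acceptance relation $\propto$, the neighbourhood order is $\sqsubset$, and accessibility is set inclusion. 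The inductive characterisation proved inside Lemma \ref{I_consistentModel} gives $\m{M} \models \beta^{\Omega}$ iff $\beta^{\Omega} \in \Gamma'$ for every wff $\beta^{\Omega}$.

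From this characterisation, the conclusion is immediate: since $\Gamma \subset \Gamma'$, $\m{M} \models \phi^{\Theta}$ for every $\phi^{\Theta} \in \Gamma$; and since $\Gamma'$ is closed under derivability while $\Gamma' \not\vdash \alpha^{\Sigma}$, we have $\alpha^{\Sigma} \notin \Gamma'$, so $\m{M} \not\models \alpha^{\Sigma}$. The only genuine subtlety to check is that the heavy lifting of Lemma \ref{I_consistentModel}---in particular the equivalence between membership in $\Gamma'$ and satisfaction at $\chi_{\Gamma'}$, for all connectives including $\ra$, $\circledast$, $\circledcirc$ and their w-counterparts---indeed delivers the ``iff'' uniformly, so that the non-derivability of $\alpha^{\Sigma}$ translates into non-satisfaction regardless of the shape of $\alpha^{\Sigma}$. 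This is the only step that is not purely bookkeeping, but it has already been discharged by Lemma \ref{I_consistentModel}, so no new obstacle arises here.
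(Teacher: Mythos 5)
Your proposal is correct in substance but follows a genuinely different route from the paper's own proof. The paper argues through negation: by Lemma \ref{I_consistency} and the definition of n-consistency it identifies $\Gamma \not\vdash \alpha^{\Sigma}$ with the n-consistency of $\Gamma \cup \{\neg\alpha^{\Sigma}\}$, then converts consistency into the existence of a model of $\Gamma \cup \{\neg\alpha^{\Sigma}\}$ via Lemmas \ref{I_modelForSet} and \ref{I_consistentModel}, and finally reads off $\m{M} \not\models \alpha^{\Sigma}$ from $\m{M} \models \neg\alpha^{\Sigma}$. You bypass the negated formula entirely: soundness (Theorem \ref{I_soundness}) gives the right-to-left direction in one line, and for left-to-right you invoke Lemma \ref{I_subMax} to extend $\Gamma$ to a prime n-theory $\Gamma'$ with $\Gamma' \not\vdash \alpha^{\Sigma}$ and then read the countermodel off the canonical construction of Lemma \ref{I_consistentModel}. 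Your decomposition is the standard van Dalen-style intuitionistic argument, and it is arguably the more robust one here: the paper's step ``$\Gamma \not\vdash \alpha^{\Sigma}$ implies $\Gamma \cup \{\neg\alpha^{\Sigma}\}$ is consistent'' is exactly the direction that is delicate intuitionistically (classically it is immediate, but constructively one cannot in general pass from non-derivability of $\alpha^{\Sigma}$ to consistency of its negation --- compare $\not\vdash p \vee \neg p$ while $\neg(p \vee \neg p)$ is inconsistent), whereas your argument only needs non-derivability to be preserved under prime extension, which Lemma \ref{I_subMax} supplies directly. The one point you should make fully explicit is the step you yourself flag: you need the biconditional truth lemma $\m{M} \models \beta^{\Omega}$ iff $\beta^{\Omega} \in \Gamma'$ in order to conclude $\m{M} \not\models \alpha^{\Sigma}$ from $\alpha^{\Sigma} \notin \Gamma'$, but Lemma \ref{I_consistentModel} is only \emph{stated} in the ``if'' direction; its proof establishes both directions for the atomic, $\wedge$ and $\ra$ cases but is sketchier for $\vee$, $\circledast$ and $\circledcirc$. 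Either cite the biconditional form explicitly or supply the missing converse cases; as written, that reliance is the only unsecured load-bearing step in your argument.
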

\begin{proof}
$\Gamma \not\vdash \alpha^{\Sigma}$ iff $\Gamma \cup \{\neg\alpha^{\Sigma}\}$ is n-consistent by lemma \ref{I_consistency} and the definition of n-consistent set. By lemmas \ref{I_modelForSet} and \ref{I_consistentModel}, $\Gamma \cup \{\neg\alpha^{\Sigma}\}$ is n-consistent iff there is a model $\m{M}$, such that $\m{M} \models \phi^{\Theta}$, for every $\phi^{\Theta} \in \Gamma \cup \{\neg\alpha^{\Sigma}\}$. It means that $\m{M}$ satisfies every formula of $\Gamma$ and $\m{M} \not\models \alpha^{\Sigma}$.\end{proof}

\begin{theorem}
$\Gamma \models \alpha^{\Sigma}$ implies $\Gamma \vdash \alpha^{\Sigma}$ (Completeness).
\end{theorem}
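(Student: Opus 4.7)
The plan is to prove the contrapositive, $\Gamma \not\vdash \alpha^{\Sigma}$ implies $\Gamma \not\models \alpha^{\Sigma}$, which reduces the theorem to a direct application of Corollary \ref{I_trick}. All the technical work has been absorbed into the preceding lemmas (most crucially the construction of a canonical model of prime theories in Lemma \ref{I_consistentModel}), so the completeness argument itself should be a short chain of equivalences.

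First I would assume $\Gamma \not\vdash \alpha^{\Sigma}$. By Corollary \ref{I_trick}, this yields a model $\m{M}$ satisfying every $\phi^{\Theta} \in \Gamma$ while $\m{M} \not\models \alpha^{\Sigma}$. By the definition of $\models$ on sets of formulas (a model validates $\Gamma \models \alpha^{\Sigma}$ only if every model of $\Gamma$ is also a model of $\alpha^{\Sigma}$), the witness $\m{M}$ certifies $\Gamma \not\models \alpha^{\Sigma}$. Taking the contrapositive of what was just shown gives exactly the claim $\Gamma \models \alpha^{\Sigma} \Rightarrow \Gamma \vdash \alpha^{\Sigma}$.

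The only subtlety to double-check is that the formulas in $\Gamma \cup \{\alpha^{\Sigma}\}$ lie in $\bs{S}_{n}$, since Corollary \ref{I_trick} is stated via Lemmas \ref{I_modelForSet}, \ref{I_consistency} and \ref{I_consistentModel}, all of which quantify over $\bs{S}_{n}$ (or $\bs{S}_{w}$). As in Theorem \ref{I_soundness}, the fitting restriction guarantees that formulas appearing with an empty context have neighbourhood characteristic, and one can argue the absence of attribute variables and of $\shneg N$, $\shpos N$ subformulas just as in the classical case of \cite{PUC-Logic-ArXiv}. The truly hard step of the whole completeness development was Lemma \ref{I_consistentModel}, where the canonical model had to be built with accessibility interpreted as set-inclusion of prime n-theories so as to force Lemma \ref{roadRunner}-style persistence through the $\ra$-clause; once that investment is in place, the completeness theorem itself is essentially free.
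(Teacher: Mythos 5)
Your proposal is correct and matches the paper's own proof exactly: the paper also argues the contrapositive, deriving $\Gamma \not\vdash \alpha^{\Sigma}$ implies $\Gamma \not\models \alpha^{\Sigma}$ directly from Corollary \ref{I_trick} and the definition of logical consequence. Your extra remark about membership in $\bs{S}_{n}$ is a reasonable sanity check but does not change the argument.
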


\begin{proof}
$\Gamma \not\vdash \alpha^{\Sigma}$ implies $\Gamma \not\models \alpha^{\Sigma}$, by the corollary \ref{I_trick} and the definition of logical consequence.\end{proof}

\begin{theorem} \label{I_normalization}
Every derivation of iPUC-Logic is normalizable.
\end{theorem}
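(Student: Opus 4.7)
The plan is to follow the standard Prawitz normalization strategy, adapted to handle the scope-label rules of iPUC-ND. Because the classical absurd rule (rule 7) has been removed, the system is purely intuitionistic plus some modal-like label rules, and no special trick for classical negation is required; the usual inductive argument on a complexity measure should go through.

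First, I would enumerate the proper reductions, one for each pair formed by an introduction rule immediately followed by the matching elimination. The propositional cases ($\wedge$, $\vee$, $\ra$, and $\neg \equiv \cdot \ra \bot_{n}$) are the standard Prawitz reductions. For each scope label ($\circledast$, $\circledcirc$, $\bullet$, $\ast$) and for world/neighbourhood variables, an $X$-introduction followed by an $X$-elimination similarly produces a detour that is removed by grafting the subderivation of the introduction premise into the discharged-hypothesis slot of the elimination, renaming variables as needed to avoid capture under the fitting restrictions.

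Second, I would introduce the commutative conversions for the rules that discharge hypotheses and have arbitrary conclusions, namely $\vee$-elimination together with the existential-style eliminations associated with $\bullet$ and $\circledcirc$ (rules 5, 18, 20 in the enumeration of the paper). These conversions permute any elimination rule applied to the conclusion of such an inference up into each of its minor premises, transforming any maximal formula produced at the junction into a local detour that can then be removed by a proper reduction.

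Third, I would define the degree of a formula counting both logical connectives and scope labels, and define the degree of a derivation as the maximum degree of a maximal formula occurring in it, with a secondary measure (e.g.\ the number of maximal formulas of maximum degree, or the depth at which a topmost such formula appears). A strategy that always reduces a topmost maximal formula of highest degree will strictly decrease this lexicographic measure, which terminates. The main obstacle will be verifying that the proper reductions for label rules do not create new maximal formulas of strictly greater degree. This requires a careful case analysis of how a substituted label sequence $\Sigma$ interacts with the context $\Delta$ above the detour; lemma \ref{I_lemmaRa} and the fitting machinery of \cite{PUC-Logic-ArXiv} should give the bookkeeping needed to keep the degree bounded. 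The existential-style commutative conversions will also need the usual subinduction to show that they strictly reduce the count of maximal segments of maximum degree, rather than merely relocate them.
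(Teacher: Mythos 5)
Your proposal is essentially the paper's own argument spelled out in detail: the paper's proof consists of a single sentence deferring to the PUC normalization arguments plus ``permutation reductions following the approach of van Dalen for the operators not in $\mathcal{L}_{-}$'', which is precisely your combination of the standard Prawitz proper reductions with commutative conversions for $\vee$-elimination and the existential-style label eliminations (rules 5, 18, 20), driven by a lexicographic degree measure. So you have taken the same route, only with the bookkeeping (degree of labelled formulas, interaction of reductions with the fitting restrictions) made explicit where the paper leaves it implicit.
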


\begin{proof}
To the PUC normalization arguments we add some permutation reductions following the approach of van Dalen \cite{vanDalen} for the operators not in $\m{L}_{-}$.\end{proof}

The arguments about decidability and complexity of iPUC are analogous to PUC and produce analogous results.

\begin{theorem}
The problem of satisfiability is $\bs{PS}$-complete for iPUC-Logic.
\end{theorem}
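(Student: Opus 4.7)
The plan is to prove $\bs{PS}$-hardness and membership in $\bs{PS}$ separately, mirroring the complexity argument for classical PUC referenced in the paper. The two parts rely on different ideas and I would handle them in order.

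For the lower bound, I would reduce intuitionistic propositional satisfiability (PSPACE-hard by Statman) to iPUC satisfiability. The embedding is essentially the identity: a propositional IPC formula, read as an iPUC formula with empty attributes, satisfies the IPC Kripke semantics exactly, because the accessibility relation of Definition \ref{I_accessibility} is reflexive, transitive, and monotone on atoms, which is precisely the frame condition for IPC. One direction of the reduction is immediate; the other follows by discarding the neighbourhood structure of a candidate iPUC model to extract a standard IPC Kripke model.

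For the upper bound, I would adapt the PUC decision procedure. The algorithm attempts to build a satisfying model in the style of Lemma \ref{I_consistentModel} by a depth-first search, keeping on the stack only the current chain of nested worlds and neighbourhoods. Existential attribute labels ($\circledcirc$, $\bullet$) are handled by guessing a fresh label and recursing; universal labels ($\circledast$, $\ast$) are handled by co-guessing, which fits PSPACE by Savitch's theorem; intuitionistic implication is handled by branching over accessible worlds, keeping only one branch on the stack at a time. The recursion depth is bounded by the modal/neighbourhood depth of the input formula, and each stack frame is of polynomial size, yielding a polynomial-space bound overall.

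The main obstacle is showing that the three sources of branching — intuitionistic accessibility, nested neighbourhoods, and worlds within neighbourhoods — compose without exponentiating space. Lemma \ref{roadRunner} is crucial here: it ensures that satisfaction propagates along accessibility, so the algorithm never has to roll back to explore alternative accessible refinements for already-verified subformulas. Together with the fitting constraints exploited throughout Lemma \ref{I_completeResolutionPUC-ND}, which bound how neighbourhood and world labels can alternate in a well-formed formula, this yields the same polynomial-space bookkeeping as in the PUC case, hence $\bs{PS}$-membership. Combined with the hardness reduction above, this gives $\bs{PS}$-completeness.
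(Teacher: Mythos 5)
Your proposal is correct and matches the paper's approach: the paper's proof simply says to repeat the argument for the classical PUC complexity theorem but with the lower bound now supplied by Statman's PSPACE-hardness of intuitionistic propositional logic, which is exactly your decomposition into an identity-style reduction from IPC (exploiting that Definition \ref{I_accessibility} gives reflexive, transitive, atom-monotone accessibility, i.e.\ IPC Kripke frames) plus the inherited polynomial-space decision procedure. Your write-up merely fills in details the paper leaves implicit by citation.
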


\begin{proof}
We repeat the arguments for theorem 57 of \cite{PUC-Logic-ArXiv}, but with reference to the article of Statman\cite{Statman}.
\end{proof}

\section{Constructive Counterfactuals}

In the proof of the relative completeness of PUC-Logic for the Lewis V-Logic, the proofs of CONNEX and CPR demanded the classical absurd rule. In order to do provide a constructive approach over the counterfactual logic, we need to find another way to prove them. Lewis \cite{Lewis} gave us (page 124) an alternative formulation for the CPR rule with a simpler rule and an axiom schema: $$\AxiomC{$\phi \ra \psi$}
\UnaryInfC{$\psi \cp \phi$} \DisplayProof \mbox{ and }(\phi\cp(\phi\vee\psi))\vee(\psi\cp(\phi\vee\psi))$$ That rule is a derived rule in iPUC as we can see by the derivation below, using lemma 59 of \cite{PUC-Logic-ArXiv} (proof $\Pi$):\begin{center}
\AxiomC{$^{1}$[$\phi^{\bullet}$]} \RightLabel{$N$}
\UnaryInfC{$\phi^{\bullet}$} \RightLabel{$N,\bullet$}
\UnaryInfC{$\phi$}
\AxiomC{$^{2}$[$\phi$]} \RightLabel{$N,u$}
\UnaryInfC{$\phi$}
\AxiomC{$\Pi$} \RightLabel{$N,u$}
\UnaryInfC{$\phi \ra \psi$} \RightLabel{$N,u$}
\BinaryInfC{$\psi$} \RightLabel{$N,u$}
\UnaryInfC{$\psi$} \RightLabel{$N,\bullet$}
\UnaryInfC{$\psi$} \RightLabel{$N$}
\UnaryInfC{$\psi^{\bullet}$} \RightLabel{$N$}
\BinaryInfC{$\psi^{\bullet}$} \RightLabel{$N$} \LeftLabel{1}
\UnaryInfC{$\phi^{\bullet} \ra \psi^{\bullet}$} \RightLabel{$\circledast$}
\UnaryInfC{$\phi^{\bullet} \ra \psi^{\bullet}$}
\UnaryInfC{$(\phi^{\bullet} \ra \psi^{\bullet})^{\circledast}$} \DisplayProof
\end{center}

Inside iPUC-Logic, the axioms $(\phi\cp(\phi\vee\psi))\vee(\psi\cp(\phi\vee\psi))$ are unnecessary to recover the completeness of the V-Logic, because they are derived formulas in the presence of CONNEX:

\begin{scriptsize}
\begin{prooftree}
\AxiomC{$(\phi^{\bullet} \ra \psi^{\bullet})^{\circledast} \vee (\psi^{\bullet} \ra \phi^{\bullet})^{\circledast}$}
\UnaryInfC{$(\phi^{\bullet} \ra \psi^{\bullet})^{\circledast} \vee (\psi^{\bullet} \ra \phi^{\bullet})^{\circledast}$}
\AxiomC{$^{3}$[$(\phi^{\bullet} \ra \psi^{\bullet})^{\circledast}$]}
\UnaryInfC{$(\phi^{\bullet} \ra \psi^{\bullet})^{\circledast}$} \RightLabel{$\circledast$}
\UnaryInfC{$\phi^{\bullet} \ra \psi^{\bullet}$}
\AxiomC{$^{1}$[$\phi^{\bullet}$]} \RightLabel{$\circledast$}
\UnaryInfC{$\phi^{\bullet}$}
\BinaryInfC{$\psi^{\bullet}$} \RightLabel{$\circledast,\bullet$}
\UnaryInfC{$\psi$} \RightLabel{$\circledast,\bullet$}
\UnaryInfC{$\phi \vee \psi$} \RightLabel{$\circledast$}
\UnaryInfC{$(\phi \vee \psi)^{\bullet}$} \RightLabel{$\circledast$} \LeftLabel{1}
\UnaryInfC{$\phi^{\bullet} \ra (\phi \vee \psi)^{\bullet}$}
\UnaryInfC{$((\phi^{\bullet} \ra (\phi \vee \psi)^{\bullet})^{\circledast}$}
\UnaryInfC{$((\phi^{\bullet} \ra (\phi \vee \psi)^{\bullet})^{\circledast} \vee ((\psi^{\bullet} \ra (\phi \vee \psi)^{\bullet})^{\circledast}$}
\AxiomC{$^{3}$[$(\psi^{\bullet} \ra \phi^{\bullet})^{\circledast}$]}
\UnaryInfC{$(\psi^{\bullet} \ra \phi^{\bullet})^{\circledast}$} \RightLabel{$\circledast$}
\UnaryInfC{$\psi^{\bullet} \ra \phi^{\bullet}$}
\AxiomC{$^{2}$[$\psi^{\bullet}$]} \RightLabel{$\circledast$}
\UnaryInfC{$\psi^{\bullet}$}
\BinaryInfC{$\phi^{\bullet}$} \RightLabel{$\circledast,\bullet$}
\UnaryInfC{$\phi$} \RightLabel{$\circledast,\bullet$}
\UnaryInfC{$\phi \vee \psi$} \RightLabel{$\circledast$}
\UnaryInfC{$(\phi \vee \psi)^{\bullet}$} \RightLabel{$\circledast$} \LeftLabel{2}
\UnaryInfC{$\psi^{\bullet} \ra (\phi \vee \psi)^{\bullet}$}
\UnaryInfC{$((\psi^{\bullet} \ra (\phi \vee \psi)^{\bullet})^{\circledast}$}
\UnaryInfC{$((\phi^{\bullet} \ra (\phi \vee \psi)^{\bullet})^{\circledast} \vee ((\psi^{\bullet} \ra (\phi \vee \psi)^{\bullet})^{\circledast}$} \LeftLabel{3}
\TrinaryInfC{$((\phi^{\bullet} \ra (\phi \vee \psi)^{\bullet})^{\circledast} \vee ((\psi^{\bullet} \ra (\phi \vee \psi)^{\bullet})^{\circledast}$}
\end{prooftree}
\end{scriptsize}

So, in order to recover the full V-Logic expressivity, we need to introduce some rules into the iPUC-Logic. The first approach is a rule, motivated by \cite{Escobar}, with two restrictions: (a) the formulas fit into their contexts; (b) $\Delta$ has no universal quantifier. And the CONNEX axioms became theorems:
\begin{prooftree}
\AxiomC{[$\alpha^{\Sigma,\bullet} \ra \beta^{\Omega,\bullet}$]} \RightLabel{$\Delta,\circledast$}
\alwaysSingleLine
\UnaryInfC{$\Pi_{1}$} \RightLabel{$\Theta$}
\UnaryInfC{$\gamma^{\Lambda}$}
\AxiomC{[$\beta^{\Omega,\bullet} \ra \alpha^{\Sigma,\bullet}$]} \RightLabel{$\Delta,\circledast$}
\alwaysSingleLine
\UnaryInfC{$\Pi_{2}$} \RightLabel{$\Theta$}
\UnaryInfC{$\gamma^{\Lambda}$} \LeftLabel{31:\hspace*{1cm}} \RightLabel{$\Theta$}
\alwaysSingleLine
\BinaryInfC{$\gamma^{\Lambda}$}
\end{prooftree}

\begin{center}
\AxiomC{[$\alpha^{\Sigma,\bullet} \ra \beta^{\Omega,\bullet}$]} \RightLabel{$\circledast$}
\UnaryInfC{$\alpha^{\Sigma,\bullet} \ra \beta^{\Omega,\bullet}$}
\UnaryInfC{$(\alpha^{\Sigma,\bullet} \ra \beta^{\Omega,\bullet})^{\circledast}$}
\UnaryInfC{$(\alpha^{\Sigma,\bullet} \ra \beta^{\Omega,\bullet})^{\circledast} \vee (\beta^{\Omega,\bullet} \ra \alpha^{\Sigma,\bullet})^{\circledast}$}
\AxiomC{[$\beta^{\Omega,\bullet} \ra \alpha^{\Sigma,\bullet}$]} \RightLabel{$\circledast$}
\UnaryInfC{$\beta^{\Omega,\bullet} \ra \alpha^{\Sigma,\bullet}$}
\UnaryInfC{$(\beta^{\Omega,\bullet} \ra \alpha^{\Sigma,\bullet})^{\circledast}$}
\UnaryInfC{$(\alpha^{\Sigma,\bullet} \ra \beta^{\Omega,\bullet})^{\circledast} \vee (\beta^{\Omega,\bullet} \ra \alpha^{\Sigma,\bullet})^{\circledast}$}
\BinaryInfC{$(\alpha^{\Sigma,\bullet} \ra \beta^{\Omega,\bullet})^{\circledast} \vee (\beta^{\Omega,\bullet} \ra \alpha^{\Sigma,\bullet})^{\circledast}$} \DisplayProof
\end{center}

\begin{definition}
Given any model $\m{M} = \ob \m{W} , \$ , \m{V} , \chi \cb$, the set of testimonials of $\alpha^{\Sigma}$ is the set of neighbourhoods $\m{T}(\alpha^{\Sigma}) = \{ N \in \$(\chi) \; | \; \ob \m{W} , \$ , \m{V} , \chi , N \cb \models \alpha^{\Sigma,\bullet} \}$ and the set of believers of $\alpha^{\Sigma}$ is the set of neighbourhoods $\m{B}(\alpha^{\Sigma}) = \{ N \in \$(\chi) \; | \; \ob \m{W} , \$ , \m{V} , \chi , N \cb \models \alpha^{\Sigma,\ast} \}$.
\end{definition}

$\m{T}(\alpha^{\Sigma})$ is an hereditary set \cite{Goldblatt} in $\$(\chi)$, because given any neighbourhood $M \in \m{T}$, for every $L \in \$(\chi)$, such that $M \subset L$, then $\ob \m{W} , \$ , \m{V} , \chi , L \cb \models \shneg M$ and, by world existential propagation, $\ob \m{W} , \$ , \m{V} , \chi , L \cb \models \alpha^{\Sigma,\bullet}$ and $L \in \m{T}$. But the hereditary sets are in total order in $\$(\chi)$, so, given any set $\m{T}(\beta^{\Omega})$, either $\m{T}(\alpha^{\Sigma}) \subset \m{T}(\beta^{\Omega})$ or $\m{T}(\beta^{\Omega}) \subset \m{T}(\alpha^{\Sigma})$. In the case where $\m{T}(\alpha^{\Sigma}) \subset \m{T}(\beta^{\Omega})$, we know that every neighbourhood, that has a $\alpha^{\Sigma}$-world, must have a $\beta^{\Omega}$-world and we conclude that $\m{M} \models (\alpha^{\Sigma,\bullet} \ra \beta^{\Omega,\bullet})^{\circledast}$. By analogy, in the other case $\m{M} \models (\beta^{\Omega,\bullet} \ra \alpha^{\Sigma,\bullet})^{\circledast}$. So, by definition, $\m{M} \models (\alpha^{\Sigma,\bullet} \ra \beta^{\Omega,\bullet})^{\circledast} \vee (\beta^{\Omega,\bullet} \ra \alpha^{\Sigma,\bullet})^{\circledast}$ and the rule 31 is sound.\\

We can see that the soundness of rule 31 relies upon the hereditary sets definition. But the rule depends on the index of two formulas, the operator $\ra$ and the scope of the rule and it makes this rule two much complex. So, instead of adding rule 31 into iPUC-Logic, we prefer to add the representation of the hereditary sets into the system. We can also treat the hereditary sets of believers to recover some symmetry between the world quantifiers, because we have, by a similar argument, $\m{M} \models (\alpha^{\Sigma,\ast} \ra \beta^{\Omega,\ast})^{\circledast} \vee (\beta^{\Omega,\ast} \ra \alpha^{\Sigma,\ast})^{\circledast}$. This extension of iPUC will be called iPUC$^{V}$-Logic.

\begin{figure}[htbp] \label{I_theSystem}
\begin{dedsystem}
\hline

%rule 32
\AxiomC{$\phantom{-}$} \LeftLabel{32:} \RightLabel{$\Delta,\m{T}(\alpha^{\Sigma})$}
\UnaryInfC{$\alpha^{\Sigma,\bullet}$} 
\alwaysNoLine \UnaryInfC{$\phantom{-}$} \DisplayProof &

%rule 33
\AxiomC{$\phantom{-}$} \RightLabel{$\Delta,\m{T}(\beta^{\Omega})$}
\UnaryInfC{$\alpha^{\Sigma}$} \LeftLabel{33:} \RightLabel{$\Delta,\circledast$}
\UnaryInfC{$\beta^{\Omega,\bullet} \ra \alpha^{\Sigma}$}
\alwaysNoLine \UnaryInfC{$\phantom{-}$} \DisplayProof &

%rule 34
\AxiomC{$\phantom{-}$} \RightLabel{$\Delta,\circledast$}
\UnaryInfC{$\beta^{\Omega,\bullet} \ra \alpha^{\Sigma}$} \LeftLabel{34:}  \RightLabel{$\Delta,\m{T}(\beta^{\Omega})$}
\UnaryInfC{$\alpha^{\Sigma}$}
\alwaysNoLine \UnaryInfC{$\phantom{-}$} \DisplayProof \\

%rule 35
\AxiomC{[$\alpha^{\Sigma,\bullet}$]} \RightLabel{$\Delta,\m{T}(\beta^{\Omega})$}
\alwaysSingleLine
\UnaryInfC{$\Pi_{1}$} \RightLabel{$\Theta$}
\UnaryInfC{$\gamma^{\Psi}$}
\AxiomC{[$\beta^{\Omega,\bullet}$]} \RightLabel{$\Delta,\m{T}(\alpha^{\Sigma})$}
\alwaysSingleLine
\UnaryInfC{$\Pi_{2}$} \RightLabel{$\Theta$}
\UnaryInfC{$\gamma^{\Psi}$} \LeftLabel{35:} \RightLabel{$\Theta$}
\alwaysSingleLine
\BinaryInfC{$\gamma^{\Psi}$}
\alwaysNoLine \UnaryInfC{$\phantom{-}$} \DisplayProof &

%rule 36
\AxiomC{$\phantom{-}$} \LeftLabel{36:} \RightLabel{$\Delta,\m{B}(\alpha^{\Sigma})$}
\UnaryInfC{$\alpha^{\Sigma,\ast}$} 
\alwaysNoLine \UnaryInfC{$\phantom{-}$} \DisplayProof &

%rule 37
\AxiomC{$\phantom{-}$} \RightLabel{$\Delta,\m{B}(\beta^{\Omega})$}
\UnaryInfC{$\alpha^{\Sigma}$} \LeftLabel{37:} \RightLabel{$\Delta,\circledast$}
\UnaryInfC{$\beta^{\Omega,\ast} \ra \alpha^{\Sigma}$}
\alwaysNoLine \UnaryInfC{$\phantom{-}$} \DisplayProof \\

%rule 38
\AxiomC{$\phantom{-}$} \RightLabel{$\Delta,\circledast$}
\UnaryInfC{$\beta^{\Omega,\ast} \ra \alpha^{\Sigma}$} \LeftLabel{38:}  \RightLabel{$\Delta,\m{B}(\beta^{\Omega})$}
\UnaryInfC{$\alpha^{\Sigma}$}
\alwaysNoLine \UnaryInfC{$\phantom{-}$} \DisplayProof &

%rule 39
\AxiomC{[$\alpha^{\Sigma,\ast}$]} \RightLabel{$\Delta,\m{B}(\beta^{\Omega})$}
\alwaysSingleLine
\UnaryInfC{$\Pi_{1}$} \RightLabel{$\Theta$}
\UnaryInfC{$\gamma^{\Psi}$}
\AxiomC{[$\beta^{\Omega,\ast}$]} \RightLabel{$\Delta,\m{B}(\alpha^{\Sigma})$}
\alwaysSingleLine
\UnaryInfC{$\Pi_{2}$} \RightLabel{$\Theta$}
\UnaryInfC{$\gamma^{\Psi}$} \LeftLabel{39:} \RightLabel{$\Theta$}
\alwaysSingleLine
\BinaryInfC{$\gamma^{\Psi}$}
\alwaysNoLine \UnaryInfC{$\phantom{-}$} \DisplayProof &\\
 
\hline
\end{dedsystem}
\caption{The additional rules of iPUC$^{V}$-ND}
\end{figure}

\noindent Restriction for rules from 32 to 39: All formulas must fit into their contexts. Restriction for the rules 32 and 39: $\Delta$ has no universal quantifier.

\begin{lemma}
iPUC$^{V}$ is sound.
\end{lemma}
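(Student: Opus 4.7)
The plan is to reduce to the soundness of iPUC already established by Theorem \ref{I_soundness}, and to check only that each of the eight new rules 32--39 preserves the resolution relation; the inductive argument of Lemma \ref{I_completeResolutionPUC-ND} then transports directly to iPUC$^{V}$. I would extend the context semantics so that $\m{M} \models^{\Delta,\m{T}(\alpha^{\Sigma})} \varphi$ means evaluation with the rightmost neighbourhood position universally quantified over the testimonial set $\m{T}(\alpha^{\Sigma})$, and analogously for $\m{B}(\alpha^{\Sigma})$; the fitting restriction attached to rules 32--39 guarantees this interpretation is always well defined. Two facts about these sets, both established in the paragraph preceding Figure 1, are all that the arguments will use: $\m{T}(\alpha^{\Sigma})$ is the set of neighbourhoods where $\alpha^{\Sigma,\bullet}$ is satisfied (likewise $\m{B}(\alpha^{\Sigma})$ for $\alpha^{\Sigma,\ast}$), and both collections are hereditary in the totally ordered $\$(\chi)$.

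The easy rules I would dispatch first. Rules 32 and 36 are immediate from the definitions of $\m{T}$ and $\m{B}$: every $N \in \m{T}(\alpha^{\Sigma})$ satisfies $\alpha^{\Sigma,\bullet}$ by construction, and symmetrically for believers. Rules 33/34 and their believer-analogues 37/38 are introduction/elimination pairs for a restricted universal implication, and they follow from the equivalence ``$N \in \m{T}(\beta^{\Omega})$ iff $\ob \m{W},\$,\m{V},\chi,N \cb \models \beta^{\Omega,\bullet}$'': one direction repackages a derivation in context $\Delta,\m{T}(\beta^{\Omega})$ as the implication $\beta^{\Omega,\bullet} \ra \alpha^{\Sigma}$ under the $\circledast$ quantifier, the other direction instantiates such an implication at an arbitrary testimonial neighbourhood of $\beta^{\Omega}$.

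The real obstacle is rules 35 and 39, which are case-analysis rules whose soundness must be licensed by the model rather than by a syntactic choice. Here I would invoke the dichotomy for hereditary sets: for any two formulas, either $\m{T}(\alpha^{\Sigma}) \subseteq \m{T}(\beta^{\Omega})$ or $\m{T}(\beta^{\Omega}) \subseteq \m{T}(\alpha^{\Sigma})$. In the first case every $N \in \m{T}(\alpha^{\Sigma})$ lies in $\m{T}(\beta^{\Omega})$ and therefore satisfies $\beta^{\Omega,\bullet}$, so the right-hand assumption $[\beta^{\Omega,\bullet}]$ in context $\Delta,\m{T}(\alpha^{\Sigma})$ is semantically discharged and $\Pi_{2}$ delivers $\gamma^{\Psi}$; in the other case the symmetric argument discharges the left-hand assumption via $\Pi_{1}$. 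Rule 39 is identical modulo replacing $\m{T}$ by $\m{B}$ and $\bullet$ by $\ast$.

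Since none of the new rules introduces a fresh quantifier scope or discharges hypotheses in a way incompatible with the size-of-context bookkeeping used in Lemma \ref{I_resolution}, the inductive step of Lemma \ref{I_completeResolutionPUC-ND} extends verbatim with these eight additional base cases, and the soundness of iPUC$^{V}$ then follows exactly as in Theorem \ref{I_soundness}. The heart of the argument is the hereditariness-plus-linear-order observation used for rules 35 and 39; everything else amounts to unpacking the definitions of $\m{T}(\alpha^{\Sigma})$ and $\m{B}(\alpha^{\Sigma})$.
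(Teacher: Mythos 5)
Your proposal follows essentially the same route as the paper's own proof: extend the context semantics so that $\m{T}(\alpha^{\Sigma})$ and $\m{B}(\alpha^{\Sigma})$ act as restricted universal neighbourhood quantifiers, verify that rules 32--34 and 36--38 preserve resolution by unpacking these definitions, and ground rules 35 and 39 in the dichotomy $\m{T}(\alpha^{\Sigma}) \subseteq \m{T}(\beta^{\Omega})$ or $\m{T}(\beta^{\Omega}) \subseteq \m{T}(\alpha^{\Sigma})$ coming from hereditariness in the totally ordered $\$(\chi)$, exactly as the paper does. The argument is correct and the identification of the hereditary-set dichotomy as the crux matches the paper's emphasis.
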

\begin{proof} First of all, we need to add $\m{T}$ and $\m{B}$ as labels. Those sets may be empty and may be defined by an universal quantification. For example, $\m{T}(\alpha^{\Sigma})$ may be defined by the sentence: for all $N \in \$(\chi)$, such that $\alpha^{\Sigma}$ holds in some of its worlds. So, we consider $\m{T}$ and $\m{B}$ as restrictions of the label $\circledast$, thus as universal quantifiers over neighbourhoods. For the semantics:
\begin{enumerate}
\setcounter{enumi}{18}
\item $\ob \m{W} , \$ , \m{V} , \chi \cb \models \beta^{\Omega,\m{T}(\alpha^{\Sigma})}$ iff:
$\alpha^{\Sigma} \in \bs{F}_{n}$, $\beta^{\Omega} \in \bs{F}_{w}$ and $\forall \lambda \in \m{A}(\chi)$, $\forall N \in \$(\lambda)$ : if $\ob \m{W} , \$ , \m{V} , \lambda , N \cb \models \alpha^{\Sigma,\bullet}$, then $\ob \m{W} , \$ , \m{V} , \lambda , N \cb \models \beta^{\Omega}$;
\item $\ob \m{W} , \$ , \m{V} , \chi \cb \models \beta^{\Omega,\m{B}(\alpha^{\Sigma})}$ iff: $\alpha^{\Sigma} \in \bs{F}_{n}$, $\beta^{\Omega} \in \bs{F}_{w}$ and $\forall \lambda \in \m{A}(\chi)$, $\forall N \in \$(\lambda)$ : if $\ob \m{W} , \$ , \m{V} , \lambda , N \cb \models \alpha^{\Sigma,\ast}$, then $\ob \m{W} , \$ , \m{V} , \lambda , N \cb \models \beta^{\Omega}$.
\end{enumerate}
For the definition of the previous rules, we only need to change the first restriction of the rule 9: (a) $\Delta$ must have no occurrence of universal quantifiers over neighbourhoods. For the proof of the lemmas and theorems for soundness and completeness, the arguments for $\m{T}$ and $\m{B}$ should follow the arguments for $\circledast$.\\To prove that iPUC$^{V}$ is sound, we follow the strategy for PUC and iPUC. So, we need to prove that the additional rules preserve resolution.
\begin{enumerate}
\setcounter{enumi}{31}
%rule 32
\item From the fitting relation and lemma 2 of \cite{PUC-Logic-ArXiv}, $s(\{\Delta,\m{T}(\alpha^{\Sigma}\})$ must be odd, because $\alpha^{\Sigma,\bullet} \in \bs{F}_{w}$. So, $s(\Delta)$ is even. If we take some model $\m{H} = \ob \m{W} , \$ , \m{V} , z \cb$, such that $\m{M} \multimap_{s(\Delta)} \m{H}$, then we know, by definition, that $\forall \lambda \in \m{A}(\chi)$, $\forall N \in \$(\lambda)$ : if $\ob \m{W} , \$ , \m{V} , \lambda , N \cb \models \alpha^{\Sigma,\bullet}$, then $\ob \m{W} , \$ , \m{V} , \lambda , N \cb \models \alpha^{\Sigma}$, which means that $\m{M} \models^{\Delta} \alpha^{\Sigma,\bullet,\m{T}(\alpha^{\Sigma})}$ and, by rule 13, $\m{M} \models^{\Delta,\m{T}(\alpha^{\Sigma})} \alpha^{\Sigma,\bullet}$;

%rule 33
\item If $\m{M} \models^{\Delta,\m{T}(\beta^{\Omega})} \alpha^{\Sigma}$, then, by the rule 14, $\m{M} \models^{\Delta} \alpha^{\Sigma,\m{T}(\beta^{\Omega})}$. From $\alpha^{\Sigma,\m{T}(\beta^{\Omega})} \in \bs{F}_{n}$, the fitting relation and lemma 2 of \cite{PUC-Logic-ArXiv}, we know that $s(\Delta)$ is even. If we take some model $\m{H} = \ob \m{W} , \$ , \m{V} , z \cb$, such that $\m{M} \multimap_{s(\Delta)} \m{H}$ and $\m{H} \models \alpha^{\Sigma,\m{T}(\beta^{\Omega})}$, then $\forall \lambda \in \m{A}(z)$, $\forall N \in \$(\lambda)$ : if $\ob \m{W} , \$ , \m{V} , \lambda , N \cb \models \beta^{\Omega,\bullet}$, then $\ob \m{W} , \$ , \m{V} , \lambda , N \cb \models \alpha^{\Sigma}$. By definition $\forall \lambda \in \m{A}(z) : \$(\lambda) = \$(z)$, then we know that $\forall N \in \$(z) : \forall \lambda \in \m{A}(z)$, if $\ob \m{W} , \$ , \m{V} , \lambda , N \cb \models \beta^{\Omega,\bullet}$, then $\ob \m{W} , \$ , \m{V} , \lambda , N \cb \models \alpha^{\Sigma}$, what means that $\forall N \in \$(z) : \beta^{\Omega,\bullet} \ra \alpha^{\Sigma}$. We conclude that $\m{H} \models (\beta^{\Omega,\bullet} \ra \alpha^{\Sigma})^{\circledast}$ and, by definition, $\alpha^{\Sigma,\m{T}(\beta^{\Omega})} \models_{\m{M}:s(\Delta)} (\beta^{\Omega,\bullet} \ra \alpha^{\Sigma})^{\circledast}$, which means, by lemma 21 of \cite{PUC-Logic-ArXiv}, that $\m{M} \models^{\Delta} (\beta^{\Omega,\bullet} \ra \alpha^{\Sigma})^{\circledast}$ and, by rule 13, $\m{M} \models^{\Delta,\circledast} \beta^{\Omega,\bullet} \ra \alpha^{\Sigma}$;

%rule 34
\item Analogous to the rule 33, but in the reverse order of the conclusions;

%rule 35
\item From rule 14, the fitting relation, and lemma 2 of \cite{PUC-Logic-ArXiv}, we know that $s(\Delta)$ is even. If we take some model $\m{H} = \ob \m{W} , \$ , \m{V} , z \cb$, such that $\m{M} \multimap_{s(\Delta)} \m{H}$, we know that $\m{T}(\alpha^{\Sigma}) \subset \m{T}(\beta^{\Omega})$ or $\m{T}(\beta^{\Omega}) \subset \m{T}(\alpha^{\Sigma})$. This means that $\forall N \in \$(z)$ : if $\ob \m{W} , \$ , \m{V} , z , N \cb \models \alpha^{\Sigma,\bullet}$ then $\ob \m{W} , \$ , \m{V} , z , N \cb \models \beta^{\Omega,\bullet}$ or $\forall N \in \$(z)$ : if $\ob \m{W} , \$ , \m{V} , z , N \cb \models \beta^{\Omega,\bullet}$ then $\ob \m{W} , \$ , \m{V} , z , N \cb \models \alpha^{\Sigma,\bullet}$. We can expressed it by $\m{H} \models \alpha^{\Sigma,\bullet,\m{T}(\beta^{\Omega})} \vee \beta^{\Omega,\bullet,\m{T}(\alpha^{\Sigma})}$, because $\forall \lambda \in \m{A}(z) : \$(\lambda) = \$(z)$. By definition, $\m{H} \models \alpha^{\Sigma,\bullet,\m{T}(\beta^{\Omega})}$ or $\m{H} \models \beta^{\Omega,\bullet,\m{T}(\alpha^{\Sigma})}$, then, by definition, $\m{M} \models^{\Delta} \alpha^{\Sigma,\bullet,\m{T}(\beta^{\Omega})}$ or $\m{M} \models^{\Delta} \beta^{\Omega,\bullet,\m{T}(\alpha^{\Sigma})}$ and, using rule 13, $\m{M} \models^{\Delta,\m{T}(\beta^{\Omega})} \alpha^{\Sigma,\bullet}$ or $\m{M} \models^{\Delta,\m{T}(\alpha^{\Sigma})} \beta^{\Omega,\bullet}$. Using the same argument for the subderivations $\Pi_{1}$ and $\Pi_{2}$ as in lemma 28 of \cite{PUC-Logic-ArXiv}, then $\m{M} \models^{\Theta} \gamma^{\Psi}$ and the hypothesis may be discharged;

%rule 36
\item Analogous to the case of rule 32;

%rule 37
\item Analogous to the case of rule 33;

%rule 38
\item Analogous to the rule 37, but in the reverse order of the conclusions;

%rule 39
\item Analogous to the case of rule 35.
\end{enumerate}\end{proof}

\begin{lemma}
iPUC$^{V}$ is normalizing.
\end{lemma}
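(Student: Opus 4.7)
The plan is to extend the normalization argument used for Theorem \ref{I_normalization} by adding proper and permutation reductions for the new rules 32--39, following once again van Dalen \cite{vanDalen}. The additional rules split into three groups that need separate treatment, and for each one I must show that the reduction strategy from iPUC remains well founded when the new reductions are interleaved with the old ones.

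First I would handle the introduction/elimination pair (33,34) and its analogue (37,38). These behave like the $\ra$-introduction/elimination pair relativised to the label $\m{T}(\beta^{\Omega})$ (resp. $\m{B}(\beta^{\Omega})$): a derivation ending in an application of rule 34 whose major premise is concluded by rule 33 is reducible by collapsing the two steps into the subderivation of the premise of rule 33, re-scoped at $\m{T}(\beta^{\Omega})$. The proper reduction decreases the size of the cut formula in exactly the same way as the standard $\ra$-reduction, so it fits into the same termination measure.

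Second, I would treat rules 32 and 36 as generalised axiom introductions: since they have no open premises, they cannot by themselves produce a maximum formula, but they can feed a subsequent elimination. Only the combinations 32 followed by rule 34 (and 36 followed by rule 38) yield something reducible; in each case the reduction replaces the pair by a trivial rule-32 (resp. rule-36) application at the appropriate scope, which is terminating because the scope labels shrink.

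Finally, the main obstacle is the disjunction-like rules 35 and 39. These have the same shape as $\vee$-elimination (two subderivations with different discharged hypotheses, both concluding $\gamma^{\Psi}$), so the full battery of permutation reductions is needed in order to push any elimination applied to $\gamma^{\Psi}$ up into $\Pi_{1}$ and $\Pi_{2}$, just as van Dalen does for $\vee$-elimination. I would verify, as in the proof of Theorem \ref{I_normalization}, that these permutations strictly decrease the lexicographic measure (maximum cut rank, sum of cut lengths, number of rule-35/39 applications below the maximal cut) used for iPUC, so that normalization of iPUC$^{V}$ follows by the same induction. The only non-routine check is that the side condition on rules 32 and 39 ($\Delta$ without universal quantifier) is preserved under the permutations; this follows because the permutation steps only reorder inferences without altering their scopes.
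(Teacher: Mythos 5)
Your core argument coincides with the paper's: the only genuine detour among the new rules is an application of rule 34 whose premise comes from rule 33 (and the 37/38 analogue), removed by collapsing the pair into the subderivation above rule 33 re-scoped at $\m{T}(\beta^{\Omega})$; this is exactly the reduction the paper gives. Two points of divergence. First, your claim that ``32 followed by 34'' is a reducible combination does not typecheck: rule 32 concludes $\alpha^{\Sigma,\bullet}$ in context $\Delta,\m{T}(\alpha^{\Sigma})$, which is not an implication in context $\Delta,\circledast$ and so cannot serve as the premise of rule 34; the paper simply observes that rules 32 and 36, having no premises, create no maximum formula, and that is all that is needed. Second, for rules 35 and 39 you propose the full battery of $\vee$-elimination-style permutation reductions together with a lexicographic termination measure, whereas the paper dispatches these rules in one line by noting they do not change the formulas and leave no open hypothesis to produce a detour. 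Your treatment is the more cautious one --- a formula $\gamma^{\Psi}$ introduced inside $\Pi_{1},\Pi_{2}$ and eliminated below rule 35 is precisely the configuration permutative reductions exist to handle, and the paper's dismissal is arguably too quick --- so this extra work is not wasted, but it goes beyond what the paper records and you would need to verify that the permutations interact correctly with the scope discipline, as you note at the end.
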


\begin{proof}
We consider only the additional rules in this proof. The rules 32 and 36 introduce no maximum formula and no detour, because they take no hypothesis. The rules 35 and 39 introduce no maximum formula and no detour, because they do not change the formulas and because they left no open hypothesis to produce a detour. The rules 33 and 34 can only produce a maximum formula if the conclusion of the rule 33 is taken as hypothesis of the rule 34. This situation can also be seen as a detour and is easily removed by the following reduction rule:
\begin{center}\begin{tabular}{lcr}
\AxiomC{$\phantom{-}$} \RightLabel{$\Delta,\m{T}(\beta^{\Omega})$}
\UnaryInfC{$\alpha^{\Sigma,\bullet}$} \LeftLabel{33:} \RightLabel{$\Delta,\circledast$}
\UnaryInfC{$\beta^{\Omega,\bullet} \ra \alpha^{\Sigma,\bullet}$} \LeftLabel{34:}  \RightLabel{$\Delta,\m{T}(\beta^{\Omega})$}
\UnaryInfC{$\alpha^{\Sigma,\bullet}$}
\UnaryInfC{$\Pi$}
\alwaysNoLine \UnaryInfC{$\phantom{-}$} \DisplayProof&$\rhd$&\AxiomC{$\phantom{-}$} \RightLabel{$\Delta,\m{T}(\beta^{\Omega})$}
\UnaryInfC{$\alpha^{\Sigma,\bullet}$}
\UnaryInfC{$\Pi$}
\alwaysNoLine \UnaryInfC{$\phantom{-}$} \DisplayProof
\end{tabular}\end{center}
\noindent The case of rules 37 and 38 is similar to the case of rules 33 and 34.\end{proof}

\newpage

\begin{lemma}
iPUC$^{V}$ is complete for $\bs{V}$-Logic.
\end{lemma}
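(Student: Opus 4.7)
The plan is to reuse the argument that establishes completeness of PUC-Logic with respect to Lewis' $\bs{V}$-Logic in \cite{PUC-Logic-ArXiv}, and to show that the two places where that argument invoked the classical absurd rule, namely the derivations of CPR and CONNEX, can be replaced in iPUC$^{V}$ by derivations that use rules 32--39 instead. All the remaining $\bs{V}$-Logic axioms and rules were already derived in PUC without recourse to the classical absurd rule, and those derivations transfer verbatim to iPUC$^{V}$, since the additional rules extend the system conservatively on the propositional fragment.

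For CPR, the derivation exhibited just before the introduction of rule 31 already shows that Lewis' simpler formulation $\phi\ra\psi / \psi\cp\phi$ is admissible in iPUC. Combined with the CONNEX disjunction (to be derived below), the instance $(\phi\cp(\phi\vee\psi))\vee(\psi\cp(\phi\vee\psi))$ that Lewis uses in place of the full CPR rule is also obtained, exactly as in the displayed derivation on the landscape page. So once CONNEX is proved, CPR is handled.

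For CONNEX, the target is $(\alpha^{\Sigma,\bullet} \ra \beta^{\Omega,\bullet})^{\circledast} \vee (\beta^{\Omega,\bullet} \ra \alpha^{\Sigma,\bullet})^{\circledast}$. The idea is to apply rule 35 with this disjunction as the conclusion $\gamma^{\Psi}$. In the left branch I discharge $\alpha^{\Sigma,\bullet}$ in the context $\m{T}(\beta^{\Omega})$, apply rule 33 to obtain $\beta^{\Omega,\bullet} \ra \alpha^{\Sigma,\bullet}$ in the context $\circledast$, and then use $\vee$-introduction to reach the goal; in the right branch I do the symmetric construction, discharging $\beta^{\Omega,\bullet}$ in $\m{T}(\alpha^{\Sigma})$, applying rule 33 to get $\alpha^{\Sigma,\bullet} \ra \beta^{\Omega,\bullet}$ in $\circledast$, and then $\vee$-introduction. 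Since both branches conclude the same formula, rule 35 closes the argument and yields CONNEX in the empty context.

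The main obstacle I expect is not in these two derivations but in the bookkeeping attached to rules 32--39, particularly the requirement in rules 32 and 39 that $\Delta$ contain no universal quantifier and the uniform fitting condition that all formulas fit into their contexts. I would need to check that every instance of CONNEX and CPR required by the PUC-to-$\bs{V}$ translation of \cite{PUC-Logic-ArXiv} occurs in a context that satisfies these restrictions; once that is verified, the completeness of iPUC$^{V}$ for $\bs{V}$-Logic follows by directly transcribing that translation, replacing the two classical steps by the constructive derivations described above.
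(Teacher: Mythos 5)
Your proposal matches the paper's own proof: the paper derives CONNEX by exactly the construction you describe (discharge $\alpha^{\bullet}$ in context $\m{T}(\beta)$, apply rule 33 to get $\beta^{\bullet}\ra\alpha^{\bullet}$ in context $\circledast$, internalize the label, $\vee$-introduce, do the symmetric branch, and close with rule 35), and then simply notes that by the derivations at the start of the section only CONNEX remained to be established. Your additional remarks about CPR and the side conditions on rules 32--39 are consistent with, and slightly more explicit than, what the paper says.
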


\begin{proof}
\begin{prooftree}\AxiomC{$^{1}$[$\alpha^{\bullet}$]} \RightLabel{$\m{T}(\beta)$}
\UnaryInfC{$\alpha^{\bullet}$} \RightLabel{$\circledast$}
\UnaryInfC{$\beta^{\bullet} \ra \alpha^{\bullet}$}
\UnaryInfC{$(\beta^{\bullet} \ra \alpha^{\bullet} )^{\circledast}$}
\UnaryInfC{$( \alpha^{\bullet} \ra \beta^{\bullet} )^{\circledast} \vee ( \beta^{\bullet} \ra \alpha^{\bullet} )^{\circledast}$}
\AxiomC{$^{1}$[$\beta^{\bullet}$]} \RightLabel{$\m{T}(\alpha)$}
\UnaryInfC{$\beta^{\bullet}$} \RightLabel{$\circledast$}
\UnaryInfC{$\alpha^{\bullet} \ra \beta^{\bullet}$}
\UnaryInfC{$(\alpha^{\bullet} \ra \beta^{\bullet} )^{\circledast}$}
\UnaryInfC{$( \alpha^{\bullet} \ra \beta^{\bullet} )^{\circledast} \vee ( \beta^{\bullet} \ra \alpha^{\bullet} )^{\circledast}$} \LeftLabel{\scriptsize{1}}
\BinaryInfC{$( \alpha^{\bullet} \ra \beta^{\bullet} )^{\circledast} \vee ( \beta^{\bullet} \ra \alpha^{\bullet} )^{\circledast}$} 
\alwaysNoLine \UnaryInfC{$\phantom{-}$}\end{prooftree}
\noindent By the beginning of this section, we only needed to prove that the CONNEX axioms are theorems.\end{proof}

\section*{Conclusions}

The intuitionistic approach have shown the necessity to introduce a representation of the relation of total order among neighbourhoods. And the total order is a core concept in Lewis work for counterfactual logic and deontic logic because of the role of the nesting function in his arguments. The possibility of expressing the total order in iPUC$^{V}$ may contribute to a deeper comprehension of Lewis arguments.\\

The possibility of a constructive approach over Lewis counterfactuals may also open some philosophical questions for a further research.

\section*{References}

\end{document}